\newtheorem{theorem}{Theorem}
\newtheorem{lemma}[theorem]{Lemma}
\newtheorem{fact}[theorem]{Fact}
\newtheorem*{remark}{Remark}
\theoremstyle{definition}
\newtheorem{definition}[theorem]{Definition}
\newcommand{\OPT}{\mathrm{OPT}}
\newcommand{\opt}{\mathrm{opt}}
\newcommand{\dist}{\mathrm{dist}}
\newcommand{\cost}{\mathrm{cost}}
\newcommand{\grid}{\mathrm{grid}}
\newcommand{\pro}{\mathrm{pro}}
\newcommand{\len}{\mathrm{length}}
\newcommand{\eps}{\varepsilon}
\renewcommand{\epsilon}{\varepsilon}
\newcommand{\CE}{\mathrm{CE}}
\newcommand{\NCE}{\mathrm{NCE}}
\renewcommand{\D}{\mathrm{D}(\bm{a})}
\newcommand{\qt}{\mathrm{QT}}
\newcommand{\Q}{\qt(P,\bm{a})}
\newcommand{\cqt}{\mathrm{CQT}}
\newcommand{\CQ}{\cqt(P,\bm{a})}
\renewcommand{\AC}{A}
\newcommand{\KNW}{Kisfaludi-Bak, Nederlof, and Węgrzycki\xspace}
\renewcommand{\E}{\mathbb{E}}
\newcommand{\PD}{P}
\title{A Linear Time Gap-ETH-Tight Approximation Scheme \\for Euclidean TSP}
\date{}
\author{Tobias Mömke\thanks{University of Augsburg, Germany. Partially supported by DFG Grant 439522729 (Heisenberg-Grant) and DFG Grant 439637648 (Sachbeihilfe).} \and Hang Zhou\thanks{École Polytechnique, France. \url{https://www.normalesup.org/\~zhou}. Partially supported by the Hi!Paris Grant.}}
\begin{document}
\maketitle
\thispagestyle{empty}
\begin{abstract}
The Traveling Salesman Problem (TSP) in the $d$-dimensional Euclidean space is among the oldest and most famous NP-hard optimization problems.
In breakthrough works, Arora [J. ACM 1998] and Mitchell [SICOMP 1999] gave the first polynomial time approximation schemes.
To improve the running time, Rao and Smith [STOC 1998] gave a randomized $(1/\varepsilon)^{O(1/\varepsilon^{d-1})}\cdot n\log n$ time approximation scheme.
Bartal and Gottlieb [FOCS 2013] gave a randomized approximation scheme in $2^{(1/\varepsilon)^{O(d)}} n$ time, which is linear in $n$.
Recently, Kisfaludi-Bak, Nederlof, and Węgrzycki [FOCS 2021] gave a randomized approximation scheme in $2^{O(1/\varepsilon^{d-1})} n \log n$ time, achieving a Gap-ETH tight dependence on $\varepsilon$.
It is raised as a challenging open question by Kisfaludi-Bak, Nederlof, and Węgrzycki [FOCS 2021] whether a running time of $2^{O(1/\eps^{d-1})}n$ is achievable.
We answer their question positively by giving a randomized $2^{O(1/\varepsilon^{d-1})} n$ time approximation scheme for Euclidean TSP.
\end{abstract}

\newpage
\setcounter{page}{1} % removed before submission, because cause problem when printing

\section{Introduction}

The Traveling Salesman Problem (TSP) in the $d$-dimensional Euclidean space is among the oldest and most famous NP-hard optimization problems.
The problem is to find a shortest tour on a given set of $n$ points in $\mathbb{R}^d$.
The Gödel-prize-winning approximation schemes for this problem due to Arora~\cite{Aro98} and Mitchell~\cite{Mit99} are among the most prominent results in approximation algorithms.
They are classics in advanced algorithmic courses as well as in textbooks on approximation algorithms, optimization, and geometric algorithms~\cite{Vaz2010,williamson2011design,korte12,har2011geometric,narasimhan2007geometric}.
The methods of Arora~\cite{Aro98} and Mitchell~\cite{Mit99} have numerous applications; see, e.g., \cite{arora2003approximation} for a survey. These methods are gateways to a vast variety of problems in geometric optimization, see, e.g., \cite{KNW21} for a summary.

The randomized approximation schemes in \cite{Aro98} and in \cite{Mit99} have a running time of $n(\log n)^{O(1/\eps^{d-1})}$ and $n^{O(1/\eps)}$, respectively.\footnote{The algorithm in \cite{Mit99} works when $d=2$.} The most natural goal is to improve their running times  to be \emph{optimal}.
In the last 25 years only three such results were obtained:
\begin{enumerate}
\item Rao and Smith~\cite{RS98} used geometric spanners to achieve a $(1/\varepsilon)^{O(1/\varepsilon^{d-1})}\cdot n\log n$ time randomized approximation scheme.
\item Bartal and Gottlieb~\cite{BG13} gave a $2^{(1/\varepsilon)^{O(d)}} n$ time randomized approximation scheme in the real-RAM model with atomic floor or mod operations. The running time is truly linear in $n$, but the dependence on $\eps$ is not as good as the algorithm of Rao and Smith~\cite{RS98}.
\item Very recently, Kisfaludi-Bak, Nederlof, and Węgrzycki~\cite{KNW21} gave a $2^{O(1/\varepsilon^{d-1})} n \log n$ time randomized approximation scheme. They showed that this running time has a tight dependence on $\eps$ under the Gap-Exponential Time Hypothesis (Gap-ETH). However, the dependence on $n$ is not linear.
\end{enumerate}

Kisfaludi-Bak, Nederlof, and Węgrzycki~\cite{KNW21} raised the following open question:

\vspace{3mm}
\emph{``The ideal algorithm for Euclidean TSP would have a running time of $2^{O(1/\eps^{d-1})}n$, and it would be deterministic. However, achieving this running time with a randomized algorithm is already a challenging question.''}
%The most natural way to pursue this would be to try and unify Bartal and Gottlieb’s techniques~\cite{BG13} with ours. Is it possible to get this running time without spanners by using some new ideas to handle singletons (e.g., crossings that appear on their own on a facet of a quadtree cell)?
\vspace{3mm}

In our work, we answer their question by giving a randomized algorithm with running time $2^{O(1/\eps^{d-1})}n$ for Euclidean TSP (\cref{thm:main}).

\begin{theorem}[main theorem]\label{thm:main}
    There is a randomized $(1+\eps)$-approximation scheme for the Euclidean TSP in $\mathbb{R}^d$ that runs in time $2^{O(1/\eps^{d-1})}n$ in the real-RAM model with atomic floor or mod operations.
\end{theorem}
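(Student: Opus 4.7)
The plan is to combine the sparse portal dynamic programming of \cite{KNW21} (which is optimal in $\eps$) with a compressed-quadtree decomposition in the spirit of \cite{BG13} (which is optimal in $n$), in a way that inherits the best of both. The ultimate target is a dynamic program whose number of cells is $O(n)$ and whose work per cell is $2^{O(1/\eps)}$, giving the claimed $2^{O(1/\eps)} n$ running time.

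First, I would randomly shift the coordinate system (in the spirit of Arora~\cite{Aro98}) and build a \emph{compressed quadtree} $\CQ$ whose total number of cells is $O(n)$; the compression collapses long chains of cells that have a single non-empty child, which is the step that removes the $\log n$ factor present in the standard quadtree used by \cite{Aro98,RS98,KNW21}. Atomic floor operations in the real-RAM model are crucial here: they allow snapping points to grid cells and building the compressed tree in $O(n)$ time, as in \cite{BG13}.

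Second, I would prove a structure theorem: with constant probability over the random shift, there exists a $(1+\eps)$-approximate tour that is \emph{portal-respecting} with respect to $\CQ$, uses only $2^{O(1/\eps)}$ portals on each cell boundary, and crosses each compressed cell only $2^{O(1/\eps)}$ times. This would combine (i) Arora's shifted-dissection charging argument to bound the cost of snapping crossings to portals, (ii) the sparse-crossings/patching lemma of \cite{KNW21} which drops the per-boundary crossing count from polynomial in $1/\eps$ to $2^{O(1/\eps)}$, and (iii) a new argument handling "long" compressed edges of $\CQ$: because a compressed chain contains no points and is equivalent to a sequence of nested empty annuli, any optimal tour only needs a bounded number of portal usages along that chain, so the patching cost can be amortized over the chain without incurring an additional factor in depth.

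Third, I would run a bottom-up dynamic program on $\CQ$. Each cell stores a table indexed by a pairing of its boundary portals into tour threads, recording the minimum cost of completing the tour inside that cell consistently with the pairing. The bounds from the structure theorem yield $2^{O(1/\eps)}$ admissible pairings per cell, and merging a cell's table from its children takes $2^{O(1/\eps)}$ time using the rank-based representative-set technique of \cite{KNW21}; over $O(n)$ cells this gives $2^{O(1/\eps)} n$ total.

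The main obstacle, I expect, is the structure theorem on the compressed quadtree, not the DP. In a standard quadtree the sparse-crossings analysis of \cite{KNW21} is clean because all cells at a given level have equal size; in a compressed quadtree the parent-to-child size ratio can be any power of two, so a patching argument that charges error to cell perimeters must be redone carefully and coupled to how the compression step accounts for tour edges traversing long single-child chains. Making this cost accounting tight --- so that the final bound is $(1+\eps) \OPT$ with no residual factor depending on $n$ or on the depth of $\CQ$ --- is where most of the technical work will lie.
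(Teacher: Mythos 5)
There is a genuine gap: your plan never confronts the actual source of the $\log n$ factor in \cite{KNW21}, which is not the quadtree depth but the \emph{single-crossing case} of the sparsity-sensitive patching. KNW's patching maps $k$ crossings on a side to the nearest of $g(k)=\Theta(1/(\eps^2 k))$ equidistant portals, and this provably fails when $k=1$: they give an explicit example where snapping a lone crossing to the nearest portal is not near-optimal. Their fix is to route single crossings through a geometric spanner, and computing that spanner is exactly the $O(n\log n)$ bottleneck. Your proposal just cites ``the sparse-crossings/patching lemma of \cite{KNW21}'' as ingredient (ii) and moves on, so as written it either silently inherits the spanner (and hence $n\log n$ time) or rests on a structure theorem that is false for sides crossed exactly once. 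Relatedly, the obstacle you flag as the hard part --- redoing the charging argument for non-uniform parent-child ratios in the compressed quadtree --- is largely a non-issue: the structure theorem is proved with respect to the full dissection $\D$, and compression only affects the DP bookkeeping and data structures, which were already handled by \cite{BG13}.

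What the paper actually does to close this gap: it first computes a 2-approximate tour $T$ in $O(n)$ time via \cite{BG13}, and augments the portal set of each side $F$ with one crossing point of $T$ with $F$ (if one exists). For a side with a single $\OPT$-crossing that is also $T$-crossing, the crossing is moved to $T$'s crossing at negligible cost. The remaining ``interesting'' sides (single $\OPT$-crossing, no $T$-crossing) are the heart of the proof: using the notion of badly cut edges and a case analysis into critical and non-critical edges --- charging either to a modified proximity potential $\beta_{i,a}$ analyzed jointly over horizontal and vertical lines, or to the length saved by shortcutting $T$ inside non-critical cells --- the paper shows the nearest-portal snap is affordable there after all. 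Some replacement for the spanner of this kind is indispensable; without it, the combination of a compressed quadtree with KNW's DP does not yield \cref{thm:main}.
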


\begin{remark}
The real-RAM model with atomic floor or mod operations is crucial to achieve a running time that is linear in $n$. This model was previously used in the open question raised by Rao and Smith~\cite{RS98} regarding a linear running time and in the linear time algorithm by Bartal and Gottlieb~\cite[Corollary I.2]{BG13}.
Indeed, in other models of computation, a linear running time is unachievable. For example, there is an $\Omega(n \log n)$ time lower bound for Euclidean TSP in the decision tree model~\cite{DKS97,RS98}.
\end{remark}

The running time in \cref{thm:main} has a tight dependence on $\eps$ under the Gap-ETH: if there is a $2^{o(1/\eps^{d-1})}\cdot \poly(n)$ time algorithm in the real-RAM model with atomic floor or mod operations, then there is a $2^{o(1/\eps^{d-1})}\cdot \poly(n)$ time algorithm in the decision tree model, which would refute Gap-ETH according to~\cite{KNW21}.

\subsection{Related Works}

Achieving an algorithm for TSP with a (conditionally) optimal running time is the goal in many contexts.
For example, De~Berg, Bodlaender, Kisfaludi-Bak, and Kolay~\cite{de2023eth} gave the best-to-date \emph{exact} algorithm for Euclidean TSP in time $2^{O(n^{1-1/d})}$ and a matching lower bound of $2^{\Omega(n^{1-1/d})}$ under the Exponential Time Hypothesis (ETH).
For example, for \emph{planar graphs}, Klein~\cite{klein2008linear} gave a $2^{O(1/\eps)} n$ time approximation scheme for planar TSP, and Marx~\cite{marx2007optimality} showed that the dependence on $\eps$ in Klein's algorithm is near-optimal under the ETH.

\subsection{Notations and Previous Techniques}

Let $P\subseteq \mathbb{R}^d$ denote a set of $n$ points.
By preprocessing the input instance, we may assume that $P\subseteq \{0,\dots,L\}^d$ for some integer $L=O(n/\eps)$ that is a power of 2.
This preprocessing step can be done in $O(n)$ time~\cite{BG13}.

For two points $u,v\in \mathbb{R}^d$, let $\dist(u,v)$ denote the Euclidean distance between $u$ and $v$.
For a segment $I$ in $\mathbb{R}^d$, let $\len(I)$ denote the length of $I$.
For an edge $e=(u,v)$, let $\cost(e):=\dist(u,v)$.
For a set of edges $S$, we define $\cost(S):=\sum_{e\in S}\cost(e)$.
For any path $\pi$ of points $x_1,x_2,\dots, x_m$ in $\mathbb{R}^d$ where $m\in \mathbb{N}$, define $\cost(\pi)=\sum_{i=1}^{m-1} \dist(x_i,x_{i+1})$.
We say that $\pi$ is a \emph{tour} if $x_1=x_m$.

Let $\OPT$ denote an optimal tour on $P$.
Let $\opt$ denote $\cost(\OPT)$.

\subsubsection{Dissection and Quadtree}
We follow the notations in \cite{KNW21}.
Let $d\geq 2$ be a constant.
We pick $a_1,\dots,a_d\in \{1,\dots,L\}$ independently and uniformly at random and define $\bm a := (a_1,\ldots, a_d)\in\{0,\dots,L\}^d$.
Consider the hypercube
\[C(\bm a):= \bigtimes\limits_{i=1}^d [-a_i+1/2, 2L-a_i+1/2].\]
Note that $C(\bm a)$ has side length $2L$ and each point from $P$ is contained in $C(\bm a)$ since $P\subseteq\{0,\dots,L\}^d$.

We define the {\it dissection} $D(\bm a)$ of $C(\bm a)$ to be a tree constructed recursively, where each vertex is associated with a hypercube in $\mathbb R^d$.
The root of the tree is associated with $C(\bm a)$.
Each non-leaf vertex of the tree that is associated with a hypercube $\bigtimes_{i=1}^d [l_i,u_i]$ has $2^d$ children with which we associate hypercubes $\bigtimes_{i=1}^d I_i$, where $I_i$ is either $[l_i,(l_i+u_i)/2]$ or $[(l_i+u_i)/2,u_i]$. Each leaf vertex of the tree is associated with a hypercube of unit length.
For each hypercube in $D(\bm a)$, the \emph{level} of its facets is the hop distance in $D(\bm a)$ between the hypercube and the root of $D(\bm a)$.
For a hyperplane $h$, we define the \emph{level} of $h$ to be the smallest level of any facet contained in $h$.

A \emph{quadtree} $\Q$ is obtained from $D(\bm a)$, by stopping the recursive partitioning as soon as the associated hypercube of a vertex contains at most one point from $P$.
Each hypercube associated with a vertex in the quadtree is called a {\it cell}.
We say that a cell $C$ is \emph{redundant} if it has a child that contains the same set of input points as the parent of $C$. A redundant path is a maximal ancestor-descendant path in the tree whose internal vertices are redundant.
The \emph{compressed quadtree} $\CQ$ is obtained from the quadtree by removing all the empty children of redundant cells and replacing the redundant paths with edges.
In the resulting tree some internal cells may have a single child; we call these \emph{compressed cells}.
It is well-known that compressed quadtrees have $O(n)$ vertices, see, e.g.,~\cite{BG13}.

\subsubsection{Approach of Arora\texorpdfstring{~\cite{Aro98}}{}}

\begin{lemma}[Arora's Patching Lemma \cite{Aro98}, see also~\cite{KNW21}]
\label{lem:Arora-patch}
%Let $I$ be any line segment and $\pi$ be a closed path that crosses $I$ at least thrice.
%Then there exist line segments on $I$ whose total length is at most $6\cdot \len(I)$ and whose addition to $\pi$ changes it into a closed path $\pi'$ that crosses $I$ at most twice.
Let $h$ be a hyperplane. Let $\pi$ be a tour.
Let $I(\pi,h)$ denote the set of intersections of $\pi$ with $h$.
Assume that $h$ does not contain any endpoints of segments that define $\pi$.
Let $T$ be a tree on the hyperplane $h$ that spans $I(\pi,h)$.
Then for any point $p$ in $T$, there exist line segments contained in $h$ whose toatl length is at most $O(\cost(T))$ and whose addition to $\pi$ changes it into a tour $\pi'$ that crosses $h$ at most twice and only at $p$.
\end{lemma}

Using \cref{lem:Arora-patch}, Arora proves a structure theorem, which states that there is a near-optimal tour that crosses each facet of a cell $O(1/\eps)$ times and only through portals.
While such a near-optimal tour does not necessarily exist for a fixed quadtree, a randomly shifted quadtree works with high probability.
Thus the algorithm first computes a randomly shifted quadtree, and then applies a dynamic program to compute the best solution satisfying the properties in the structure theorem.

\subsubsection{Approach of \KNW\texorpdfstring{~\cite{KNW21}}{}}
\label{sec:KNW}
To achieve a Gap-ETH tight running time, \KNW~\cite{KNW21} introduce a powerful technique of the \emph{sparsity-sensitive patching}: for a facet of a cell that is crossed by a tour at $1<k\leq O(1/\eps)$ crossings, they modify the tour by moving each crossing to the nearest portal from a set of $g(k)$ equidistant portals. Here $g(k)=\Theta(1/(\eps^2 k))$ is a granularity  parameter that depends on $k$.
However, applying the sparsity-sensitive patching has a condition: the number of crossings $k$ is such that $k\neq 1$. This condition is necessary: when $k=1$, \KNW~\cite{KNW21} provide an example showing that moving the single crossing to the closest portal \emph{does not} lead to a near-optimal solution.
To deal with single crossings, their solution is to make use of a spanner.
The $O(n\log n)$ running time in \cite{KNW21} is due to the computation of the spanner.
In our work, we use new ideas to deal with single crossings; see \cref{sec:overview}.

\subsubsection{Linear-time 2-approximation by Bartal and Gottlieb\texorpdfstring{~\cite{BG13}}{}
}
\label{sec:BG13}
The first step in our approach requires computing a 2-approximate solution.
This can be achieved in $O(n)$ time, by  setting $\eps=1$ in the main result of Bartal and Gottlieb~\cite{BG13}. Alternatively, we could also compute a minimum spanning tree in linear time using Chan's result~\cite{Cha08} and then double the edges in the spanning tree to obtain a TSP tour.
\begin{lemma}[Bartal and Gottlieb~\cite{BG13}]
\label{lem:2-approx}
    There is randomized 2-approximation algorithm for the TSP in $\mathbb{R}^d$ that runs in time $O(n)$ in the real-RAM model with atomic floor or mod operations.
\end{lemma}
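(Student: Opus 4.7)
The plan is simply to instantiate the Bartal--Gottlieb approximation scheme~\cite{BG13} at the constant parameter $\eps = 1$. Their main theorem yields a randomized $(1+\eps)$-approximation algorithm for Euclidean TSP in $\mathbb{R}^2$ (in fact in any fixed dimension) with running time $2^{(1/\eps)^{O(1)}}\, n$ in the real-RAM model equipped with atomic floor or mod operations. Setting $\eps = 1$ makes the approximation ratio equal to $1+1 = 2$ and collapses the $\eps$-dependent prefactor to $2^{1^{O(1)}} = 2^{O(1)} = O(1)$, so the overall running time becomes $O(n)$. The randomness comes from the randomly shifted quadtree built inside their scheme, and it is preserved through the instantiation, giving the claimed \emph{randomized} 2-approximation.

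The only item worth checking is that the computational model assumed by~\cite{BG13} matches the one named in the statement. Bartal and Gottlieb work in the real-RAM model augmented with floor (or, equivalently, mod) operations, which is identical to the model in our lemma and to the one we use throughout the paper. So the instantiation transfers verbatim, and there is no further obstacle; the entire proof is a single substitution $\eps \leftarrow 1$ in~\cite{BG13}. I do not anticipate any hard step here—this statement is recorded in the paper purely as a convenient black box for later sections, where we will need a constant-factor approximation of $\opt$ computable in linear time, rather than as an independent result requiring new ideas.
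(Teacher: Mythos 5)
Your proposal is correct and is exactly the argument the paper intends: \cref{lem:2-approx} is stated as a corollary obtained by setting $\eps=1$ in the main result of Bartal and Gottlieb~\cite{BG13}, which turns the $(1+\eps)$-approximation with running time $2^{(1/\eps)^{O(1)}}n$ into a $2$-approximation running in $O(n)$ time in the same real-RAM model with atomic floor or mod operations. No further justification is given or needed in the paper.
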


\subsubsection{Portals}

\begin{definition}[grid, {\cite[Definition~II.4]{KNW21}}]
\label{def:grid}
Let $F$ be a $(d-1)$-dimensional hypercube.
Let $t$ be a positive integer.
We define $\grid(F,t)\subseteq\mathbb{R}^{d-1}$ to be an orthogonal lattice of $t$ points in $F$.
\end{definition}

As a consequence, if the hypercube has side length $l$, the minimum distance between any pair of points of $\grid(F, t)$ is $l/t^{1/(d-1)}$.
The following function $g(k)$ computes the number of portals to be placed on a facet, given the number $k$ of crossing points on that facet.
\begin{definition}[adaptation from \cite{KNW21}]
    \label{def:granularity}
    Let $r$ be a positive integer.
    For a positive integer $k$, let $q$ denote the smallest integer such that $q^{d-1} \geq (r/2)^{2d-2}/k$ and $q$ is an integer power of 2.
    Then we define $g(k):=q^{d-1}$.
\end{definition}

\subsection{Overview of Our Methods}
\label{sec:overview}
In this section, we give an  overview of our methods in the proof of \cref{thm:main}.

Our approach is based on the approach of \KNW~\cite{KNW21} and uses several new ideas.
To improve upon the running time from \cite{KNW21}, a main challenge is to deal with the case when a facet has a single crossing with a tour, see \cref{sec:KNW}.
That case is the reason why the running time in \cite{KNW21} requires a multiplicative factor of $\log n$.
In order to get rid of the $\log n$ factor, we deal with the single crossing case in a new way.

We start by computing a 2-approximate solution $T$ in linear time, using an algorithm of Bartal and Gottlieb~\cite{BG13}, see~\cref{sec:BG13}.
The solution $T$ helps us in dealing with single crossings.

\subsubsection{\texorpdfstring{$r$}{r}-basic Tours}
We define a set of portals for each facet $F$, consisting of the portals in \cite{KNW21}, plus possibly an additional point, which is a crossing between $T$ and $F$ if it exists.\footnote{When $T$ and $F$ has multiple crossings, an arbitrary crossing would work.}
See \cref{def:Z}.

\begin{definition}
\label{def:Z}
Let $r=O(1/\eps)$.
We say that a facet $F$ of a cell is \emph{$T$-crossing} if $T$ crosses $F$ through at least one point.
If $F$ is $T$-crossing, we let $x_F$ denote an \emph{arbitrary} crossing point between $F$ and $T$.
For a positive integer $k$, define
\[
    Z(F,k):=\begin{cases}
    \grid(F,g(k))\cup \{x_F\}, & \text{if $F$ is $T$-crossing;} \\
    \grid(F,g(k)), &\text{otherwise,}
    \end{cases}
\]
where the function $g(\cdot)$, which depends on $r$, is defined in \cref{def:granularity}, and the function $\grid(\cdot,\cdot)$ is defined in \cref{def:grid}.
\end{definition}

We consider tours that respect the portals in \cref{def:Z}. Such tours are called \emph{$r$-basic} tours (\cref{def:r-basic}).
An $r$-basic tour is analogous to an \emph{$r$-simple} tour in \cite{KNW21}.

\subsubsection{Structure Theorem}

We prove a structure theorem (\cref{thm:structure}), which shows the existence of an $r$-basic solution that is near optimal.

\begin{theorem}[Structure Theorem]
\label{thm:structure}
Let $P$ be a set of $n$ points in $\mathbb{R}^{d}$.
Let $\bm{a}\in \{1,\dots,L\}^d$ be a random shift.
For any large enough integer~$r$, there is an $r$-basic tour $\OPT'$ on $P\subseteq \mathbb{R}^d$ such that \[\mathbb{E}_{\bm{a}}[\cost(\OPT')]\leq (1+O(1/r))\cdot \opt.\]
\end{theorem}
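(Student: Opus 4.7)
The plan is to build $\OPT'$ by modifying $\OPT$ bottom-up through the dissection $\D$. At each cell $C$ and each side $F$ of $C$, I modify the current tour so that its crossings on $F$ (in the $\AC$ sense) all lie in $Z(F,k)$, where $k=|\AC(\text{tour},C)\cap F|$, and each crossing has multiplicity at most~$2$. Processing levels from finest to coarsest ensures that modifications at deeper levels do not undo the portal property established at shallower levels.

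For each side with $k\geq 2$, I apply the sparsity-sensitive patching from \cite{KNW21}: snap each crossing to its nearest portal in $\grid(F,g(k))$ and use \cref{lem:Arora-patch} to keep multiplicities at most~$2$. The expected cost increase (over the random shift $\bm{a}$) is $O(\opt/r)$ via the $\alpha_i$-based analysis of \cite{KNW21}; that analysis transfers essentially verbatim because adding $x_F$ to the portal set only enlarges the set of snap targets, so any crossing can still be snapped to the grid portal used by \cite{KNW21}.

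For the $k=1$ case, which is where \cite{KNW21} needed a spanner, I use $T$ as a structural witness. If $F$ is $T$-crossing, snap the single $\OPT$-crossing on $F$ to $x_F\in Z(F,1)$; if $F$ is not $T$-crossing, snap to the nearest portal in $\grid(F,g(1))$. In the first case, the snap cost $\dist(y,x_F)$ is to be charged against the length of $T$ near $F$ inside $C$, using the fact that both $\OPT$ and $T$ visit the same input points, so their respective curves through $C$ can be paired through a common input point $p\in P$ via the triangle inequality $\dist(y,x_F)\leq \dist(y,p)+\dist(p,x_F)$. Combined with $\cost(T)\leq 2\opt$ from \cref{lem:2-approx}, this should yield an $O(\opt/r)$ contribution after the usual random-shift averaging, provided the charges are distributed carefully across cells so as not to overcount.

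The main obstacle is the non-$T$-crossing sub-case of $k=1$: the naive bound $\sum_F \len(F)\cdot[k=1]$ carries a $\Theta(\log n)$ factor, so snapping to grid portals alone would contribute $O(\opt\log n/r^2)$, which is not $O(\opt/r)$ unless $r\gtrsim \log n$. To eliminate the $\log n$ factor, I would use the absence of a $T$-crossing on $F$ as a structural certificate that the input points on the two sides of $F$ inside $C$ are already connected by $T$ without traversing $F$; this lets me reroute the offending $\OPT$-curve along a portion of $T$ to bypass $F$ entirely (or to land at a nearby grid portal with small residual cost), charging the reroute to $T$'s length in a neighborhood of $F$ inside $C$. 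Making this rerouting quantitative, while coupling the random-shift analysis of the crossings of $\OPT$ and $T$ and exploiting the granularity $g(1)=\Theta(r^2)$, is the main technical hurdle in pushing the total $k=1$ cost down to $O(\opt/r)$.
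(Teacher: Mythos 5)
Your setup matches the paper's: the $k\geq 2$ case is handled by the sparsity-sensitive patching of \cite{KNW21} applied to $S=\OPT\cup T$ (your observation that enlarging the portal set only helps is exactly how the paper invokes \cref{lem:patching-KNW}), and the single-crossing, $T$-crossing case is discharged by snapping to $x_F$ and absorbing the cost into the ``near crossings'' bound (\cref{lem:single-crossing-and-T-crossing}). The problem is that the entire difficulty of the theorem lives in the case you explicitly leave open --- the single-crossing sides not crossed by $T$ (the paper's \emph{interesting} sides, \cref{lem:interesting}) --- and the idea you sketch for it is not the one that works. You propose to reroute the offending $\OPT$-curve along $T$ to avoid $F$ altogether; the paper never modifies the construction in this case (it still snaps to the nearest point of $\grid(F,g(1))$) and instead does all the work in the \emph{charging analysis}. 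Your heuristic that ``no $T$-crossing on $F$ certifies $T$ connects the two sides without traversing $F$'' is only the germ of one branch of that analysis, and as stated it does not yield a quantitative bound.

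Concretely, the paper's argument for interesting sides decomposes the per-edge cost $W_e$ into three parts: short sides (level $\geq \log\frac{L}{r\cdot\len(e)}$), whose cost telescopes to $O(\len(e)/r)$; long perpendicular sides other than the lowest-level side $F_e$, where a geometric observation shows such a side must have an endpoint (which is itself a portal of $Z(F,1)$) within distance $\len(e)$ of the crossing, so the expected cost is again $O(\len(e)/r)$ after multiplying by the crossing probability; and the lowest-level side itself, bounded by $h(e)=L/(2^{i_e}g(1))$. Bounding $\sum_e h(e)$ is the heart of the proof (\cref{lem:h(e)}) and requires machinery absent from your proposal: the notion of \emph{badly cut} edges ($\len(e)<L/(2^{i_e}r)$, the only edges for which $h(e)$ is not trivially $O(\len(e)/r)$), the dichotomy between \emph{critical} and \emph{non-critical} edges according to whether $T$ connects both endpoints of $e$ to the top sides of the two adjacent cells, a modified potential $\beta_{i,a}(x)$ charged against crossings of $S$ on a \emph{perpendicular} side for critical edges (coupling the horizontal and vertical random shifts, which is where the arbitrarily-large-proximity obstruction of \cite{KNW21} is circumvented), and a bottom-up shortcutting argument showing that every non-critical cell forces $T$ to incur a detour $\Delta\geq\gamma L/2^{i_e}$ that can absorb the charge without double counting. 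None of these steps is routine, and your proposal as written would stall exactly where you say it does.
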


The proof of the Structure Theorem is a main novelty in this paper.
First, we prove the Structure Theorem in 2 dimensions (\cref{sec:structure}), and then we generalize the proof to $d$ dimensions (\cref{sec:d-dimensions}).

In the rest of this section, we sketch high level ideas of the proof for 2 dimensions.

To begin with, when a side $F$ of a cell has a single crossing with $\OPT$ and when in addition $F$ is $T$-crossing (\cref{def:Z}), then we move the crossing between $F$ and $\OPT$ to $x_F$.
We observe that this connection cost is negligible (\cref{lem:single-crossing-and-T-crossing}).
Therefore, it suffices to consider the case when $F$ has a single crossing with $\OPT$ but has no crossing with $T$. We call such a side $F$ \emph{interesting}.

A crucial observation is that, when $F$ is interesting, we are able to afford moving the single crossing between $F$ and $\OPT$ to the closest portal (\cref{lem:interesting}).
To show \cref{lem:interesting}, for each edge $e\in\OPT$ that crosses an interesting side, we upper bound the connection cost for that crossing by an interportal distance, and we analyze the overall connection cost in \cref{lem:q(e)}.
\cref{lem:q(e)} a main technical contribution in this paper.

%For interesting sides that are shorter than the longest side crossing $e$, we analyze the cost using the notion of \emph{active} crossings and by considering horizontal connections and vertical connections simultaneously, see \cref{sec:proof-lem:interesting}.

Now we sketch the proof of \cref{lem:q(e)}.
For an edge $e$ crossing an interesting side, we bound the connection cost for that crossing in an original way.
To begin with, we define \emph{badly cut} edges (\cref{def:badly_cut}).
This definition is inspired by Cohen-Addad~\cite{cohen2020approximation}.
We observe that it suffices to consider the case when $e$ is badly cut.
Our analysis for badly cut edges is completely different from \cite{cohen2020approximation}.
The non-trivial subcase is when $e$ is \emph{critical} (\cref{fig:critical}).
To analyze critical edges, our approach (\cref{lem:CE}) combines \cite{KNW21} with several new ingredients:
First, we observe that, in \cite{KNW21}, a single crossing is problematic because the \emph{proximity} of that crossing  may be arbitrarily large.
To overcome that difficulty, we show that for a \emph{critical} edge, the proximity of \emph{another} crossing on a \emph{perpendicular}  side is bounded.
Care is needed, because in \cite{KNW21}, the patching cost for a single crossing is undefined. We define the patching cost in this situation to be the interportal distance with small probability, by analyzing horizontal and vertical lines simultaneously.
This is in contrast to the analysis in \cite{KNW21} that is only on horizontal lines or only on vertical lines.

Our approach extends to $d$ dimensions: instead of considering another crossing on a perpendicular side, we consider another crossing on a perpendicular square. See \cref{sec:d-dimensions}.

\subsubsection{Algorithm}
From the Structure Theorem, in order to compute a near-optimal solution for Euclidean TSP, it suffices to compute a shortest $r$-basic solution using a dynamic program.
Hence our algorithm for Euclidean TSP (\cref{alg:main}).

\begin{algorithm}[H]
    \caption{Algorithm for Euclidean TSP}
    \label{alg:main}
    \begin{algorithmic}[1]
        \State Compute a 2-approximate solution $T$ according to \cref{lem:2-approx}
        \State Pick a random shift $\bm{a}$ and compute a compressed quadtree $\CQ$ from the dissection $\D$
        \For{each side $F$ of each cell $C$}
                \If{$F$ is $T$-crossing}
                    \State $x_F \gets$ an \emph{arbitrary} crossing point between $T$ and $F$
                \EndIf
        \EndFor
        \State Use a dynamic program to compute a shortest $r$-basic solution
    \end{algorithmic}
\end{algorithm}

The computation of the crossing points (Lines 3--5 of \cref{alg:main}) is elementary; see \cref{sec:portals}.
A dynamic program to compute a shortest $r$-basic solution (Line 6 of \cref{alg:main}) is an adaptation from \cite{KNW21}; see \cref{sec:DP}.

This completes the proof of \cref{thm:main}.

\section{Proof of the Structure Theorem in Two Dimensions}

\label{sec:structure}

We define $\AC(\pi,C)$ to be a \emph{subset} of the crossings between a tour $\pi$ and the boundary of a cell $C$ in \cref{def:A}.
The reason for which we consider a subset of the crossings instead of the entire set of the crossings is because it  enables us not to worry about patching the tour along segments that cross the cell $C$ without visiting any points from $P$.

\begin{definition}
\label{def:A}
For a tour $\pi$ and a cell $C$, let $I_0$ denote the set of curves obtained by restricting $\pi$ to the interior of $C$.
Let $I\subseteq I_0$ be the set of curves $Q\in I_0$ such that $Q$ visits \emph{at least one} input point from $P$.
We define $\AC(\pi,C)$ to be the (multi-)set of points consisting of the two endpoints of all curves in $I$.
\end{definition}
%Note that when every curve in $I_0$ visits some point from $P$, then $\AC(\pi,C)$ is simply the (multi-)set of crossings between $\pi$ and the boundary of $C$.

\begin{definition}[$r$-basic]\label{def:r-basic}
We say that a tour $\pi$ is \emph{$r$-basic} if (i) for every side $F$ of a cell $C$ in $\CQ$, we have $\AC(\pi,C)\cap F\subseteq Z(F,|\AC(\pi,C)\cap F|)$ and (ii) every element in $\AC(\pi,C)$ has multiplicity at most 2.
\end{definition}

Let $T$ be a 2-approximate solution.
Let $S=\OPT\cup T$.

For simplicity, we assume that $h$ is a horizontal line, and $x_1, x_2, \dots, x_k$ are the $k$ crossings between $S$ and $h$ in increasing order of their $x$-coordinates.
We define the \emph{proximity} of $x_j$ as $\pro_S(x_j)=\dist(x_j, x_{j-1})$ for $j>1$ and we let $\pro_S(x_1)=\infty$.

Let $C$ be a cell.
Let $F$ be a side of $C$.
For any crossing $x$ between $S$ and $F$, we say that $x$ is \emph{active}  if $x \in \AC(\OPT,C)$, and \emph{inactive} otherwise.
Intuitively, active crossings are the points that need to be ``moved'' so as to modify $\OPT$ into  an $r$-basic solution $\OPT'$; inactive crossings are not moved.\footnote{Inactive crossings are useful to create connections when moving active crossings in Step~3 of the construction as well as to bound the total cost of moving active crossings in the analysis.
The distinction between active crossings and inactive crossings is a purely technical consideration.}

%\subsection{Construction}\label{sec:construction}
\paragraph{Construction of $\OPT'$.}
Let $\OPT'$ be initialized to $\OPT$.
For each cell $C$ in the top-down order and for each side $F$ of $C$, we modify $\OPT'$ using the following steps:
\begin{enumerate}%[label=\arabic*.]
    \item Let $N_S$ be the set of ``near'' crossings, that is, $N_S$ is the set of crossings $x$ between $S$ and $F$ satisfying $\pro_S(x)\leq \frac{L}{2^i\cdot r}$, where $i$ is the level of $F$.
    \item Let $G_S$ be the set of remaining crossings between $S$ with $F$.
    \item Create a set of line segments $\PF_F$ by connecting each vertex from $N_S$ to its successor.\footnote{The notation $\PF_F$ follows from  \cite{KNW21}. See Figure~2 in the full version of~\cite{KNW21} for an illustration on $\PF_F$.}
    \item For each line segment $\eta\in\PF_F$ that contains at least one active crossing:
    \begin{itemize}
        \item Let $x$ denote  the (only) point on $\eta$ that belongs to $G_S$;
        \item Connect $x$ to the closest point in $Z(F,|G_S|)$;
    \end{itemize}
     \item Apply the patching lemma of Arora (\cref{lem:Arora-patch}) to each line segment $\eta\in\PF_F$ that contains at least one active crossing, so that $F$ is crossed by active crossings only at points of $Z(F,|G_S|)$, and at most twice at each of these points.
\end{enumerate}
This completes the construction of $\OPT'$.

%\subsection{Feasibility}

\begin{lemma}
\label{lem:feasible}
    $\OPT'$ is an $r$-basic solution.
\end{lemma}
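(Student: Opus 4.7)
The plan is to verify the two conditions of \cref{def:r-basic} for $\OPT'$ directly, side by side. Fix a side $F$ of a cell $C$ and write $k := |G_S|$ for the number of far $S$-crossings on $F$ used in the construction.

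The first step is to establish that every active crossing of $\OPT'$ on $F$ lies in $Z(F, k)$. I would argue this segment by segment. If a segment $\eta$ produced in step~3 contains no active crossings of $\OPT$, then the local geometry inside $\eta$ is unchanged and no active crossings of $\OPT'$ are created there. Otherwise, step~4 adds a detour connecting the unique $G_S$-point $x$ of $\eta$ to the closest portal $z \in Z(F, k)$, and step~5 applies Arora's Patching Lemma to $\eta$; by construction, every surviving active crossing of $\OPT'$ within $\eta$ is located at a point of $Z(F, k)$. Taking the union over all $\eta$ yields $\AC(\OPT', C) \cap F \subseteq Z(F, k)$. The multiplicity-at-most-two condition is immediate from the Patching Lemma.

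The second step is to strengthen this inclusion into the one appearing in \cref{def:r-basic}, namely $\AC(\OPT', C) \cap F \subseteq Z(F, k')$ with $k' := |\AC(\OPT', C) \cap F|$. The crucial observation is that $k' \le k$: each distinct active crossing of $\OPT'$ on $F$ equals the portal of $\grid(F, g(k))$ closest to some $G_S$-point, so the number of distinct active crossings is at most $|G_S| = k$. Because $g$ is non-increasing and its values are powers of two, $k' \le k$ forces $g(k)$ to divide $g(k')$, and the nested structure of equidistant grids then gives $\grid(F, g(k)) \subseteq \grid(F, g(k'))$. Since $x_F$, when defined, belongs to both $Z(F, k)$ and $Z(F, k')$ by \cref{def:Z}, this yields $Z(F, k) \subseteq Z(F, k')$, completing the $r$-basic inclusion.

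The hardest part, I anticipate, is the location claim in the first step: Arora's Patching Lemma, applied in isolation, only bounds the number of crossings along $\eta$, not their positions. The detour $xz$ added in step~4 is essential to the argument, as it provides the patching with a ``target'' portal at $z$ and allows the surgery to be arranged so that every surviving active crossing within $\eta$ falls on $Z(F, k)$. Once this geometric point is accepted, the rest of the proof reduces to straightforward bookkeeping based on the monotonicity of $g$ and the nested structure of the $\grid$ function.
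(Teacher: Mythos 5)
Your proof is correct and follows essentially the same route as the paper's: it first notes that by construction (the detour in Step~4 plus the patching in Step~5) every surviving active crossing lies in $Z(F,|G_S|)$, then uses $|\AC(\OPT',C)\cap F|\leq |G_S|$ together with the monotonicity and power-of-two property of $g$ to obtain the nesting $Z(F,|G_S|)\subseteq Z(F,|\AC(\OPT',C)\cap F|)$. The only cosmetic differences are that the paper separates the case $|G_S|=1$ and introduces the intermediate set $G'$ of $G_S$-points whose segments contain active crossings, whereas you bound the crossing count directly.
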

\begin{proof}
Consider any side $F$ of any cell $C$ such that $\AC(\OPT,C)\cap F\neq \emptyset$.
If $|G_S|=1$, then $\AC(\OPT',C)$ contains a single element on $F$, and that element belongs to $Z(F,1)$. The claim follows.
If $|G_S|\geq 2$, let $G'\subseteq G_S$ denote the set of points in $G_S$ whose corresponding line segment contains at least one active crossing in $\OPT$.
Then $|\AC(\OPT',C)\cap F|=|G'|\leq |G_S|$.
Since $g(\cdot)$ is a non-increasing function and the values of $g(\cdot)$ are powers of 2, we have $g(|G'|)$ is an integer multiple of $g(|G_S|)$.
Therefore, $Z(F,|G'|)\supseteq Z(F,|G_S|)$.
By construction, each point in $\AC(\OPT',C)\cap F$ belongs to $Z(F,|G_S|)$, thus belongs to $Z(F,|G'|)$.
Since  $|\AC(\OPT',C)\cap F|=|G'|$, the claim follows.
\end{proof}

%\subsection{Cost Analysis}

\begin{lemma}
\label{lem:cost-analysis}
    The expected total connection cost in the construction of $\OPT'$ is $O(1/r)\cdot\opt$.
\end{lemma}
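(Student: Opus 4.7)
My plan is to bound the total connection cost by summing contributions side by side over all cells of $\D$, and within each side $F$ at level $i$ to split the cost into the \emph{patching cost} arising from Steps 3 and 5 of the construction and the \emph{displacement cost} arising from Step 4 (moving active far crossings onto portals of $Z(F,|G_S|)$).

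For the patching cost, I would apply \cref{lem:Arora-patch} to each segment $\eta\in\PF_F$ that carries at least one active crossing, charging at most $6\cdot\len(\eta)$ to $\eta$. Since $\PF_F$ was built by joining consecutive near crossings in $N_S$, each such segment has length at most $L/(2^i r)$ by definition of $N_S$, and summing $\pro_S$ along the line of $F$ gives a bound on the side's contribution proportional to $\sum_{x\in N_S}\pro_S(x)$. I would then charge $\pro_S(x)$ to the edge of $S=\OPT\cup T$ immediately preceding $x$ on the line of $F$; the standard level-summing argument from \cite{KNW21}, together with $\cost(T)\le 2\opt$ from \cref{lem:2-approx}, yields a total patching contribution of $O(\opt/r)$.

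For the displacement cost I would split on $|G_S|$. When $|G_S|\geq 2$, the portals in $Z(F,|G_S|)=\grid(F,g(|G_S|))$ are equidistant with spacing $O(\len(F)/g(|G_S|))$, so each active far crossing is moved by at most that spacing, which is exactly the expression bounded by the KNW proximity function $\alpha_i(x)$. Summing over active crossings and levels as in \cite{KNW21} gives $O(\opt/r)$ for the multi-crossing case. When $|G_S|=1$, if $F$ is $T$-crossing then the sole crossing is moved to $x_F$, and \cref{lem:single-crossing-and-T-crossing} charges this displacement to $\cost(T)\leq 2\opt$, losing only an $O(1/r)$ factor. If instead $F$ is interesting, I would invoke \cref{lem:interesting} — the main technical lemma — which aggregates the single-crossing displacements across all interesting sides by grouping them according to the edges $e\in\OPT$ they cross, handling short interesting sides via the active-crossing / simultaneous horizontal-vertical analysis, and the longest interesting side crossing each $e$ via the badly-cut-edge machinery split into the critical (\cref{lem:CE}) and non-critical (\cref{lem:NCE}) cases.

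The main obstacle is precisely the single interesting crossing subcase: unlike in the multi-crossing case, the displacement of one crossing to the nearest portal of $Z(F,1)$ can be as large as $\Theta(\len(F))$, so per-crossing accounting is hopeless and one must perform a collective charging against $\OPT$ and $T$. Once \cref{lem:single-crossing-and-T-crossing} and \cref{lem:interesting} (which carry out that collective accounting) are granted, assembling the three bounds above — patching, multi-crossing displacement, and single-crossing displacement — gives the claimed $O(1/r)\cdot\opt$ total and completes the proof.
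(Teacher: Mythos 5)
Your proposal is correct and follows essentially the same route as the paper: the total cost is decomposed into the patching/near-crossing cost and the multi-crossing displacement (handled as in \cite{KNW21} with $S=\OPT\cup T$ and $\cost(T)\le 2\opt$, i.e.\ \cref{lem:patching-KNW}), the single-crossing $T$-crossing case (\cref{lem:single-crossing-and-T-crossing}), and the interesting-side case (\cref{lem:interesting}). The paper's own proof of \cref{lem:cost-analysis} is exactly this three-way combination, so no further comment is needed.
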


In the rest of the section, we show \cref{lem:cost-analysis}.

\subsection{Multiple Crossings With a Side: Previous Analysis}
\label{sec:multiple-crossings}

\begin{lemma}[adaptation from \cite{KNW21}]
\label{lem:patching-KNW}
    The expected total connection cost in Step~3 of the construction is $O(1/r)\cdot \opt$.
    The expected total connection cost over all sides $F$ of all cells $C$ such that $|G_S|\geq 2$ in Step~4 of the construction is $O(1/r)\cdot \opt$.
\end{lemma}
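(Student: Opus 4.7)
The plan is to reduce both bounds to the sparsity-sensitive patching analysis of \KNW~\cite{KNW21}, applied to the union curve $S=\OPT\cup T$. The key bridge is that by \cref{lem:2-approx}, $\cost(T)\leq 2\opt$, so $\cost(S)\leq 3\opt$, and therefore any bound of the form $O(\cost(S)/r)$ immediately yields $O(\opt/r)$.

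For the Step~3 bound, fix a side $F$ at level $i$ and enumerate the crossings of $S$ with $F$ as $c_1<c_2<\cdots<c_k$. Each segment added in Step~3 spans a near crossing and an adjacent crossing, and so contributes a length at most some $\pro_S(c_j)\leq L/(2^i r)$ (possibly multiplied by a small constant coming from double-counting at the endpoints of clusters of consecutive near crossings). This length exactly matches the near-case value $\alpha_i(c_j)=\pro_S(c_j)$ in the sparsity-sensitive patching potential of \cite{KNW21}. Their analysis bounds the total sum of such near-case contributions, over all sides of all cells, by $O(\cost(S)/r)$, which equals $O(\opt/r)$.

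For the Step~4 bound restricted to sides with $|G_S|\geq 2$, each of the $|G_S|$ active crossings in $G_S$ is moved to its nearest portal in $Z(F,|G_S|)$, at cost at most half the interportal spacing $L/(2^{i+1}g(|G_S|))$. Using $g(|G_S|)\geq r^2/(4|G_S|)$, the per-side cost is $O(|G_S|^2 L/(2^i r^2))$. On the other hand, every $x\in G_S$ has $\pro_S(x)>L/(2^i r)$, so its far-case potential is $\alpha_i(x)=(L/(2^i r))^2/\pro_S(x)$, and since the proximities of the $|G_S|-1$ far crossings other than $c_1$ sum to at most $L/2^i$, the harmonic--arithmetic mean inequality yields
\[
\sum_{x\in G_S\setminus\{c_1\}}\alpha_i(x)\;\geq\;\frac{(|G_S|-1)^2\,L}{2^i r^2}\;=\;\Omega\!\left(\frac{|G_S|^2\,L}{2^i r^2}\right)
\]
whenever $|G_S|\geq 2$. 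Thus the per-side Step~4 cost is dominated by $\sum_x\alpha_i(x)$, which KNW's analysis bounds over all sides by $O(\cost(S)/r)=O(\opt/r)$.

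The main obstacle, beyond bookkeeping, is verifying that the argument of \cite{KNW21} is robust when $\OPT$ is replaced by the longer curve $S=\OPT\cup T$; since their analysis depends only on the total length of the patched curve and on the local structure of $\alpha_i$, the substitution goes through with only a constant-factor loss absorbed in the $O(\cdot)$ notation. The case $|G_S|=1$ is excluded here precisely because it is the single-crossing case that motivates the new portal set $Z(F,k)$ and is handled separately in the subsequent lemmas of this section.
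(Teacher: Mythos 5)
Your proposal is correct and takes essentially the same route as the paper: both reduce the claim to the sparsity-sensitive patching analysis of \cite{KNW21} applied to $S=\OPT\cup T$ (using $Z(F,k)\supseteq\grid(F,g(k))$ so the connection to the nearest point of $Z$ is no longer than to the nearest grid portal), and then convert $O(\cost(S)/r)$ into $O(\opt/r)$ via $\cost(S)\leq 3\opt$ from \cref{lem:2-approx}. The only difference is that the paper cites the KNW bound as a black box, whereas you additionally reconstruct its internals (matching Step~3 costs to the near-case $\alpha_i$ values and Step~4 costs to the far-case values via the AM--HM inequality), which is a correct and more self-contained account of the same argument.
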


\begin{proof}
    The proof is identical to the proof of Theorem III.3 in \cite{KNW21}, except that we replace $\OPT$ in \cite{KNW21} by $S=\OPT\cup T$.
    Note that $Z(F,|G_S|)\supseteq \grid(F,g(|G_S|))$.
    Thus by \cite{KNW21}, each of the two expected costs in the claim is $O(1/r)\cdot \cost(S)=O(1/r)\cdot (\cost(\OPT)+\cost(T))$, which is at most $O(1/r)\cdot 3\cdot\opt$, since $T$ is a 2-approximate solution.
\end{proof}

\subsection{Single Crossing With a Side: New Analysis}

In this section, we bound the total connection cost in Step 4 of the construction for all sides $F$ of all cells $C$ such that $|G_S|=1$.

Let $F$ be a side of a cell $C$ such that $|G_S|=1$.
Let $x$ be the only element in $G_S$.
There are two cases of $F$, depending on whether $F$ is $T$-crossing.

\subsubsection{Case 1: $F$ is $T$-crossing}
The cost of connecting $x$ to the closest point in $Z(F,1)$ is at most the cost of connecting $x$ to $x_F$, since $x_F\in Z(F,1)$.
Since $|G_S|=1$, $x$ and $x_F$ are considered as ``near'' crossings on $F$ in the construction. By \cref{lem:patching-KNW}, the expected overall cost of connecting ``near'' crossings (i.e., Step 3 of the construction) is $O(1/r)\cdot\opt$, hence the following bound.

\begin{lemma}
\label{lem:single-crossing-and-T-crossing}
The expected total connection cost over all sides $F$ of all cells $C$ such that $|G_S|=1$ and $F$ is $T$-crossing in Step~4 of the construction is $O(1/r)\cdot \opt$.
\end{lemma}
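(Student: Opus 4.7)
The plan is to formalize the informal argument given in the paragraph immediately preceding the lemma statement. I first fix a side $F$ with $|G_S|=1$ and $F$ being $T$-crossing, and let $x$ denote the unique element of $G_S$ on $F$. In Step~4 of the construction, $x$ is connected to its closest point in $Z(F,1)$, and since $x_F \in Z(F,1)$ by \cref{def:Z}, the cost charged to $F$ is at most $\dist(x, x_F)$. The work is thus reduced to bounding $\sum_F \dist(x, x_F)$ over all $F$ satisfying the two hypotheses.

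The key geometric observation is that $\dist(x, x_F)$ is bounded by the total length of the $\PF_F$-segments lying between $x$ and $x_F$ along $F$. Both points are crossings of $S = \OPT \cup T$ with $F$; since $|G_S|=1$, every $S$-crossing on $F$ other than $x$ belongs to $N_S$. Consequently, the interval of $F$ between $x$ and $x_F$ is partitioned by consecutive $S$-crossings into sub-segments, each of which coincides with (or is contained in) a segment of $\PF_F$ induced by adjacent near crossings. Summing these lengths produces the claimed bound on $\dist(x, x_F)$.

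Summing over all sides $F$ with $|G_S|=1$ and $F$ $T$-crossing then gives
\[
\sum_F \dist(x, x_F) \;\leq\; \sum_F \sum_{\eta \in \PF_F} \len(\eta),
\]
which is at most the total Step~3 connection cost across all cells, and hence $O(1/r)\cdot \opt$ by \cref{lem:patching-KNW}. Combined with the first paragraph, this yields the lemma.

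The one point that needs careful verification is the coverage claim that the $\PF_F$-segments between $x$ and $x_F$ together cover the interval $[x, x_F]$ on $F$, including the segment incident to $x$ itself. This is a direct unpacking of the definition of $\PF_F$ under the hypothesis $|G_S|=1$: once one observes that all intermediate $S$-crossings are near, each sub-segment between consecutive $S$-crossings is a near-to-near linkage built by Step~3. No new technical idea beyond the hint preceding the statement is required, and I do not anticipate a substantive obstacle.
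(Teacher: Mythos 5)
Your proposal is correct and follows essentially the same route as the paper's own (two-sentence) argument: bound the Step-4 cost at $F$ by $\dist(x,x_F)$ using $x_F\in Z(F,1)$, note that $|G_S|=1$ forces every other $S$-crossing on $F$ to be near so that the Step-3 segments span the interval between $x$ and $x_F$, and charge the total to the Step-3 connection cost via \cref{lem:patching-KNW}. Your explicit verification of the coverage claim is just a careful unpacking of what the paper asserts implicitly, so no new content or deviation is involved.
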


\subsubsection{Case 2: $F$ is not $T$-crossing}
This is the most interesting part of the analysis that contains several new ideas.
\begin{definition}[interesting sides]
Let $F$ be a side of a cell.
We say that $F$ is an \emph{interesting side} if $F$ is not $T$-crossing and is such that $|G_S|=1$.
%A crossing between $\OPT$ and an interesting side $F$ is said to be an \emph{interesting crossing}.
\end{definition}

%The key to our analysis is the following main lemma.

\begin{lemma}
\label{lem:interesting}
The expected total connection cost over all interesting sides $F$ of all cells $C$ in Step~4 of the construction is $O(1/r)\cdot \opt$.
\end{lemma}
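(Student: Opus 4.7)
The plan is to charge the connection cost incurred at each interesting side to the edges of $\OPT$ that cross it, and then bound the total expected charge (over the random shift $\bm{a}$) to each $e\in\OPT$ by $O(\len(e)/r)$. Summing over $e\in\OPT$ then yields the claimed $O(1/r)\cdot\opt$. Two basic observations drive the charging. First, if $F$ is interesting then $F$ is not $T$-crossing, so the unique element of $G_S$ on $F$ must lie on an edge of $\OPT$, and the charge at $F$ is assigned to that edge. Second, since $g(1)=\Theta(r^2)$ and $Z(F,1)=\grid(F,g(1))$ when $F$ is not $T$-crossing, the cost of moving the single crossing on $F$ to its closest portal is at most the interportal distance $O(\len(F)/r^2)$.

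For a fixed $e\in\OPT$, I would split the interesting sides crossing $e$ into the unique longest such side $F^{\ast}(e)$ and all shorter interesting sides. The shorter interesting sides crossing $e$ are handled by extending the sparsity-sensitive analysis of \cite{KNW21}. Since $|G_S|=1$ on such a side of level $i$, the single crossing has proximity $>L/(2^i r)$, so the charge $O(\len(F)/r^2)$ can be distributed against the portals that the proximity lower bound guarantees. The new ingredient, as previewed in \cref{sec:overview}, is to work with \emph{active} crossings and to look at horizontal and vertical lines \emph{simultaneously}, because the single-crossing case is exactly where the per-direction analysis of \cite{KNW21} breaks down. Summing the $O(\len(F)/r^2)$ bound geometrically over all levels of shorter interesting sides crossing $e$ yields $O(\len(e)/r)$.

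For the longest interesting side $F^{\ast}(e)$, the naive bound $O(\len(F^{\ast})/r^2)$ is too weak, since $\len(F^{\ast})$ may exceed $\len(e)$. Here I would invoke the notion of \emph{badly cut} edges (\cref{def:badly_cut}), adapted from \cite{cohen2020approximation}: the random shift $\bm{a}$ ensures that $e$ is badly cut only with small probability, and when $e$ is not badly cut one can check that the single-crossing issue on $F^{\ast}$ does not arise. When $e$ is badly cut, I would distinguish the \emph{critical} case (\cref{fig:critical}), covered by \cref{lem:CE}, from the \emph{non-critical} case (\cref{fig:non-critical}), covered by \cref{lem:NCE}, depending on which sides of the cell the endpoints of $e$ connect to through $T$. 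In the non-critical case, an elementary geometric argument gives $\len(e)\geq \Omega(\len(F^{\ast}))$, so the charge is already $O(\len(e)/r^2)$. In the critical case, $T$ must cross a side of the cell perpendicular to $F^{\ast}$, which produces a proximity witness of bounded size; one then extends the $\alpha_i(x)$ analysis of \cite{KNW21} to the single-crossing situation by defining $\alpha_i(x)$ to be the interportal distance with small probability and analyzing horizontal and vertical lines jointly.

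The main obstacle will be the critical badly-cut case, which is precisely the single-crossing obstruction that forced \cite{KNW21} to build a spanner and pay the extra $\log n$ factor. Resolving it requires coupling three ingredients: the random-shift probability bound on $e$ being badly cut, the bounded proximity on a perpendicular $T$-crossing for critical edges, and a joint horizontal/vertical patching cost analysis. Once these ingredients combine to give an expected charge of $O(\len(e)/r)$ per edge $e\in\OPT$, the lemma follows by summation over $e\in\OPT$.
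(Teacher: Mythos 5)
Your overall charging scheme (per-edge expected charge $O(\len(e)/r)$, summed over $\OPT$) and your treatment of the single longest side via badly-cut edges and the critical/non-critical dichotomy are in the spirit of the paper. But there is a concrete gap in how you dispose of \emph{all the other} interesting sides crossing $e$. You claim that summing the interportal bound $O(\len(F)/r^2)$ geometrically over the ``shorter'' interesting sides yields $O(\len(e)/r)$. This is only true for sides of length at most $r\cdot\len(e)$. When $e$ is badly cut, there can be interesting sides at levels $i_e, i_e+1,\dots$ (other than the one you single out as $F^{\ast}(e)$, e.g.\ perpendicular ones near a corner, or the second side at level $i_e$) whose lengths still exceed $r\cdot\len(e)$ by an arbitrary factor; for those the geometric sum is dominated by $O\bigl(L/(2^{i_e}r^2)\bigr)$, which can be much larger than $\len(e)/r$. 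The paper handles exactly these sides ($W_e^{(2)}$ in \cref{sec:proof-lem:interesting}) with a different argument: such a side $F$ must be perpendicular to $F_e$ with an endpoint $A$ on $F_e$, and $A$ is itself a portal of $\grid(F,g(1))$ lying within $\len(e)$ of the crossing, so the connection cost is at most $\len(e)$ (not the interportal distance); combined with the probability at most $\len(e)2^i/L$ that $e$ crosses a level-$i$ line at all, this sums to $O(\len(e)/r)$. Your plan has no substitute for this step.

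Two further points. First, your justification for the short sides via proximity is backwards: if $|G_S|=1$, the unique element of $G_S$ is the \emph{first} crossing, whose proximity is $\infty$ by convention, so the KNW proximity charge $\alpha_i(x)$ degenerates to $0$ --- that is precisely why single crossings are problematic, not a resource you can distribute against. Fortunately no proximity argument is needed there: for sides of length at most $r\cdot\len(e)$ the plain interportal bound already sums to $O(\len(e)/r)$. Second, in the non-critical badly-cut case you cannot conclude $\len(e)\geq\Omega(\len(F^{\ast}))$ --- badly cut means the opposite, $\len(e)<\len(F^{\ast})/r$. The paper's \cref{lem:NCE} instead charges $h(e)$ to the detour $\Delta=\len(Q)-\len(Q')$ that the $2$-approximate tour $T$ makes inside the cell, showing $\Delta\geq\gamma L/2^{i_e}$ and using a bottom-up shortcutting of $T$ to avoid double charging.
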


The key to the proof of \cref{lem:interesting} is the following lemma (\cref{lem:q(e)}).
\cref{lem:q(e)} is a main novelty in this paper.

Let $\PD_i:=\frac{L}{2^{i}\cdot g(1)}$. Thus $\PD_i$ is the interportal distance on a side $F$ at level $i$ where $|G_S|=1$; in other words, $\PD_i$ is the interportal distance in $\grid(F, g(1))$.

\begin{lemma}[main lemma]
\label{lem:q(e)}
For any edge $e\in \OPT$, let $i_e\in \mathbb{N}$ denote the smallest level of a line that intersects $e$.
Let $X\subseteq \OPT$ denote the set of edges $e\in \OPT$ such that $e$ crosses an interesting side at level $i_e$.
Then we have
\[\E_{a,b}\left[\sum_{e\in X} P_{i_e}\right]=O(1/r)\cdot\opt.\]
\end{lemma}

The proof of \cref{lem:q(e)} is in \cref{sec:proof-q(e)}.
The proof of \cref{lem:interesting} is in \cref{sec:proof-interesting}.

\subsection{Proof of \cref{lem:cost-analysis} Using \cref{lem:interesting}}
%\begin{proof}[Proof of \cref{lem:cost-analysis}]
The claim in \cref{lem:cost-analysis} follows from \cref{lem:patching-KNW,lem:single-crossing-and-T-crossing,lem:interesting}.

\subsection{Proof of the Main Lemma (\cref{lem:q(e)})}
\label{sec:proof-q(e)}

Let $a\in\{1,\dots,L\}$ denote the random shift that decides the levels of vertical lines.
Let $b\in\{1,\dots,L\}$ denote the random shift that decides the levels of horizontal lines.

\begin{definition}[badly cut]
\label{def:badly_cut}
    We say that an edge $e\in X$ is \emph{badly cut} if $\cost(e)<L/(2^{i_e} \cdot r)$.
Let $X_{\rm bad}\subseteq X$ denote the set of edges $e\in X$ that are badly cut.
\end{definition}

%Let $X_{\rm good}:=X\setminus X_{\rm bad}$.

\begin{fact}
\label{fact:non-bad}
    \[\sum_{e\in X\setminus X_{\rm bad}} P_{i_e}\leq O(1/r)\cdot\opt.\]
\end{fact}
\begin{proof}
    For any edge $e\in X\setminus X_{\rm bad}$, we have $P_{i_e}=\frac{L}{2^{i_e}\cdot g(1)}\leq O(1/r)\cdot \cost(e)$, since $g(1)=O(r^2)$.
Thus \[\sum_{e\in X\setminus X_{\rm bad}} P_{i_e}\leq O(1/r)\cdot \sum_{e\in X\setminus X_{\rm bad}}\cost(e)\leq O(1/r)\cdot\opt,\]
where the last inequality is because $X\setminus X_{\rm bad}\subseteq \OPT$.
\end{proof}

From \cref{fact:non-bad}, it suffices to consider badly cut edges $e\in X_{\rm bad}$ in the rest of the analysis.

Let $e$ denote an edge in $X_{\rm bad}$.
Since $e\in X$, $e$ crosses a vertical line of level $i_e$ or a horizontal line of level $i_e$.
To simplify the presentation, we assume that $e$ crosses a vertical line of level $i_e$; let $\ell$ denote this line.
Let $C$ denote the unique cell whose vertical midline segment, letting it be $F_e$, is such that $F_e$ belongs to $\ell$ and $F_e$ intersects $e$.
Let $F_e'$ denote the horizontal midline of $C$.
Let $C_1$ denote the half-cell of $C$ that is to the left of $F_e$.
Let $C_2$ denote the half-cell of $C$ that is to the right of $F_e$.
Let $u$ (resp.\ $v$) be the endpoint of $e$ that belongs to $C_1$ (resp.\ $C_2$).
See~\cref{fig:two cases}.

\begin{figure}[H]
    \begin{center}
        \includegraphics[scale=0.16]{T-top_0_modified.pdf}
    \end{center}
    \caption{The edge $e$ crosses the side $F_e$. Here $C_1$ is the rectangle consisting of the two subcells to the left of $F_e$, and $C_2$ is the rectangle consisting of the two subcells to the right of $F_e$.}
    \label{fig:two cases}
\end{figure}

\begin{definition}[centered]
\label{def:centered}
%Let $x$ denote the crossing between $e$ and $F_e$.
We say that a point on $F_e$ is \emph{centered} if that point is within distance $\len(F_e)/100$
to the center point of $F_e$.
%Let $x$ be a crossing between $\OPT$ and a vertical line $h$ of level $i$.
%Let $b$ denote the shift that decides the level of horizontal  lines.
%Let $F(b)$ denote a segment on $h$ that has level $i$ and contains $x$, breaking ties arbitrarily.
%We say that $x$ is \emph{centered} w.r.t.\ $b$ if $x$ is within distance $|F(b)|/100$ to the center point of $F(b)$.
\end{definition}

Let $\tilde X\subseteq X_{\rm bad}$ denote the set of edges $e\in X_{\rm bad}$ such that the crossing between $e$ and $F_e$ is centered.
Intuitively, since an edge $e$ is centered with probability $1/50$, we have the following lemma (\cref{lem:tilde-X}).

\begin{lemma}
\label{lem:tilde-X}
    If $\mathbb{E}_{a,b}\big[\sum_{e\in \tilde{X}} P_{i_e}\big]=O(1/r)\cdot\opt$, then $\mathbb{E}_{a,b}\big[\sum_{e\in X_{\rm bad}} P_{i_e}\big]=O(1/r)\cdot\opt$.
\end{lemma}

The proof of \cref{lem:tilde-X} is in \cref{sec:centered-crossing}.

From \cref{lem:tilde-X}, it suffices to bound $\mathbb{E}_{a,b}\big[\sum_{e\in \tilde{X}} P_{i_e}\big]$. Thus we focus on centered crossings in the rest of the analysis.

\subsubsection{Critical and Non-critical Edges}
\begin{definition}[critical]
\label{def:critical}
Let $e=(u,v)$ be an edge in $\tilde X$.
Let $\pi_1$ be a path in $T$ that starts at a boundary point in $C_1$, visits $u$, and ends at a boundary point in $C_1$.
Let $\pi_2$ be a path in $T$ that starts at a boundary point in $C_2$, visits $v$, and ends at a boundary point in $C_2$.
Let $\Delta_1$ denote the length of $\pi_1$ minus the distance between the two endpoints of $\pi_1$.
Let $\Delta_2$ denote the length of $\pi_2$ minus the distance between the two endpoints of $\pi_2$.
We say that $e$ is \emph{critical} if $\Delta_1+\Delta_2<r\cdot P_{i_e}$ and is \emph{non critical} otherwise.
Let $\CE\subseteq \tilde X$ denote the set of critical edges $e\in \tilde X$.
Let $\NCE\subseteq \tilde X$ denote the set of non-critical edges $e\in \tilde X$.
\end{definition}

\begin{figure}[H]
\begin{center}
\begin{subfigure}[b]{.32\textwidth}
\centering
\includegraphics[width=.8\textwidth]{T-top_modified.pdf}\\
\caption{critical edge $e$}
\label{fig:critical}
\end{subfigure}
\begin{subfigure}[b]{.32\textwidth}
\centering
\includegraphics[width=.8\textwidth]{non-critical_modified.pdf}\\
\caption{non critical edge $e$}
\label{fig:non-critical}
\end{subfigure}
\caption{The dashed red lines depict sub-paths of $T$.}
\label{fig:critical-and-non-critical}
\end{center}
\end{figure}

We bound the total cost for critical edges and the total cost for non-critical edges in the next two subsections.

\subsubsection{Bounding the cost for critical edges}
\label{sec:critical}

In order to bound the cost for critical edges, first, we define $\beta_{i,a}(x)$ in \cref{def:beta}. This definition is inspired by the definition of $\alpha_i(x)$ in \cite{KNW21}.

%For a horizontal line $h$, let $i$ denote the level of $h$.
%Note that $i$ and $a$ are independent.
%The following definition of $\beta_{i,a}(x)$ is inspired by $\alpha_i(x)$ from \cite{KNW21}.

\begin{definition}
\label{def:beta}
For each crossing point $x$ between $S$ and a horizontal line $h$, for each integer $i\in \mathbb{N}$, and for each integer $a\in\{1,\dots,L\}$, we define $\beta_{i,a}(x)$ as follows:\\[2mm]
If $\pro_S(x)> \frac{L}{2^i r}$, then
\[
\beta_{i,a}(x):=8\cdot \left(\frac{L}{2^i r}\right)^2\frac{1}{\pro_S(x)}.
\]
\noindent
If $\pro_S(x)\leq\frac{L}{2^i r}$,  then
\[
    \beta_{i,a}(x):=
    \begin{cases}
    \PD_i, &\text{if $[-a+x^{(1)}-\pro_S(x),-a+x^{(1)}]$ contains an integer multiple of $L/2^i$}\\%w.p.\ 1-\frac{\pro_S(x)}{L/2^i}\\
    \pro_S(x), &\text{otherwise,}%w.p.\ \frac{\pro_S(x)}{L/2^i}.
    \end{cases}
\]
where $x^{(1)}$ denotes the horizontal coordinate of the point $x$.
\end{definition}

Let $e$ be a critical edge. See \cref{fig:critical}.

\begin{fact}
\label{fact:pi}
Both $\pi_1$ and $\pi_2$ connect the top boundary with the bottom boundary.
\end{fact}

\begin{proof}
Since $e\in \tilde X$, the crossing between $e$ and $F_e$ is within distance $\len(F_e)/100$ to the center of $F_e$.
Since $e$ is badly cut, the length of $e$ is at most $\len(F_e)/(2r)$.
Thus both $u$ and $v$ are at distance at most $(1/100+1/(2r))\cdot \len(F_e)$ to the center of $F_e$.
Since $e$ is critical and $P_{i_e}=O(1/r^2)\cdot\len(F_e)$, both $\Delta_1$ and $\Delta_2$ are smaller than $r\cdot  P_{i_e}\leq O(1/r)\cdot \len(F_e)$.

Suppose for the sake of contradiction that $\pi_1$ (resp.\ $\pi_2$) does not connect the top boundary with the bottom boundary.
Since both endpoints of $\pi_1$ (resp.\ $\pi_2$) are on the boundary of $C$ and using the fact that $\pi_1$ (resp.\ $\pi_2$) visits $u$ (resp.\ $v$) which is within distance $(1/100+1/(2r))\cdot \len(F_e)$ to the center of $F_e$, we must have $\Delta_1$ (resp.\ $\Delta_2$) is at least $(1/5)\cdot \len(F_e)$, which contradicts  the fact that $\Delta_1$ (resp.\ $\Delta_2$) is smaller than $O(1/r)\cdot \len(F_e)$ when $r$ is large enough.
The claim follows.
\end{proof}

From \cref{fact:pi}, we immediately have the following structure property (\cref{lem:Fe'}).

\begin{lemma}
\label{lem:Fe'}
    $F_e'$ crosses both $\pi_1$ and $\pi_2$.
\end{lemma}
%As a consequence, both $\pi_1$ and $\pi_2$ cross $F_e'$.

Let $y$ (resp.\ $x$) denote the crossing between $\pi_1$ (resp.\ $\pi_2$) and $F_e'$; see \cref{fig:critical}.
Since $\pi_1$ and $\pi_2$ are disjoint, $y$ and $x$ are distinct.

\begin{lemma}
\label{lem:beta}
$P_{i_e}\leq \beta_{i_e,a}(x)$.
\end{lemma}

\begin{proof}
Recall that the level of $F_e'$ is $i_e$.
Since both $x$ and $y$ belongs to $F_e'$, we have $\pro_S(x) < \len(F_e')=2\cdot (L/2^{i_e})$.

There are two cases.

Case 1: when
$\pro_S(x) > L/(2^{i_e}\cdot r)$.
Using that $g(1)\geq r^2/4$, we have
\[\beta_{i_e,a}(x)=8\cdot \left(\frac{L}{2^{i_e}\cdot r}\right)^2\frac{1}{\pro_S(x)}\geq \frac{4}{r^2}\cdot\frac{L}{2^{i_e}}\geq \frac{L}{2^{i_e}\cdot g(1)}=P_{i_e}.\]

Case 2: when $\pro_S(x)\leq L/(2^{i_e}\cdot r)$.
Observe that the segment $xy$ is cut by a level $i_e$ vertical line.
This implies that $[-a+x^{(1)}-\pro_S(x),-a+x^{(1)}]$ contains an integer multiple of $L/2^{i_e}$.
Thus by the definition of $\beta_{{i_e},a}(x)$, we have
$\beta_{{i_e},a}(x)=P_{i_e}.$
\end{proof}

\begin{lemma}
\label{lem:CE}
\[\mathbb{E}_{a,b}\left[\sum_{e \in \CE} P_{i_e}\right]\leq  O(1/r)\cdot \opt.\]
\end{lemma}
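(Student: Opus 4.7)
The plan is to use \cref{lem:beta} to charge each critical edge $e\in\CE$ to the specific $T$-crossing $x_e$ identified there, and then to sum the charges and pass to expectation via a $\beta$-analogue of the $\alpha$-bound that underlies \cref{lem:patching-KNW}. Since \cref{lem:beta} already gives $h(e)\leq \beta_{i_e,a}(x_e)$ for every $e\in\CE$, the task reduces to bounding $\mathbb{E}_{\bm{a}}\bigl[\sum_{e\in\CE}\beta_{i_e,a}(x_e)\bigr]$ by $O(1/r)\cdot\opt$.

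First I would argue that the map $e\mapsto (x_e,i_e)$ is $O(1)$-to-one. Given a pair $(x,i)$, the level-$i$ cell $C$ whose top side contains $x$ is uniquely determined; because $T$ is a tour, the maximal sub-path of $T$ that enters $C$ at $x$ is unique; and following it until it re-exits $C$ identifies a unique terminal vertex $v\in P$, namely the last $P$-vertex visited before the sub-path leaves $C$. Any critical edge charged to $(x,i)$ must satisfy $v_e=v$, and $v$ has at most two incident edges in $\OPT$, so at most two critical edges are charged to any single $(x,i)$. Hence
\[
\sum_{e\in\CE} h(e)\;\leq\;\sum_{e\in\CE}\beta_{i_e,a}(x_e)\;\leq\;2\sum_{(x,i)}\beta_{i,a}(x),
\]
where the right-hand sum ranges over all pairs $(x,i)$ with $x$ a crossing of $S$ with a horizontal line of level $i$.

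Next I would bound $\mathbb{E}_{\bm{a}}\bigl[\sum_{(x,i)}\beta_{i,a}(x)\bigr]$ by a term-wise comparison with $\alpha_i(x)$ from \cite{KNW21}. When $\pro_S(x)>L/(2^i r)$ we have $\beta_{i,a}(x)=8\alpha_i(x)$ deterministically. When $\pro_S(x)\leq L/(2^i r)$, the probability over the vertical shift $a$ (independent of the level $i$ of the horizontal line) that $[-a+x^{(1)}-\pro_S(x),\,-a+x^{(1)}]$ contains an integer multiple of $L/2^i$ equals $\pro_S(x)\cdot 2^i/L\leq 1/r$, so using $\alpha_i(x)=\pro_S(x)$ in this regime and $g(1)\geq r^2/4\geq 1$,
\[
\mathbb{E}_a[\beta_{i,a}(x)]\;\leq\;\frac{\pro_S(x)}{g(1)}+\pro_S(x)\;=\;\Bigl(1+\frac{1}{g(1)}\Bigr)\pro_S(x)\;\leq\;2\,\alpha_i(x).
\]
Either way $\mathbb{E}_a[\beta_{i,a}(x)]\leq 8\,\alpha_i(x)$. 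Combining with the expected-sum bound $\mathbb{E}\bigl[\sum_{(x,i)}\alpha_i(x)\bigr]=O(1/r)\cdot\cost(S)$ produced by the analysis of \cite{KNW21} invoked in \cref{lem:patching-KNW}, together with $\cost(S)\leq\cost(\OPT)+\cost(T)\leq 3\,\opt$ (since $T$ is a $2$-approximation), yields the claim.

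The main obstacle will be cleanly justifying the $O(1)$-to-one charging: one must verify that, for each pair $(x,i)$, the $T$-sub-path entering the level-$i$ cell at $x$ and the terminal $P$-vertex $v$ it reaches are well defined, and that this $v$ coincides with the $v_e$ appearing in \cref{def:critical}, independently of any ambiguous choices in which $T$-crossing of the top side is singled out as $x_e$.
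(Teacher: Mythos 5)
Your overall strategy coincides with the paper's: charge $h(e)\leq\beta_{i_e,a}(x_e)$ via \cref{lem:beta}, control the multiplicity of the charging, compare $\mathbb{E}_a[\beta_{i,a}(x)]$ termwise to $\alpha_i(x)$ (your computation of the two regimes and the factor $8$ is exactly the one in the paper), and conclude with the bound $\sum\alpha_i(x)=O(1/r)\cdot\cost(S)$ from \cite{KNW21} together with $\cost(S)\leq 3\,\opt$. The only genuine divergence is the multiplicity step, and it is exactly the place you flag as an obstacle: your argument does not go through as stated. \cref{def:critical} only asks that $T$ contain \emph{some} connection in $C_2$ between the top side and $v$; it does not single out the ``last'' $P$-vertex on the sub-path of $T$ entering $C_2$ at $x$. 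So $v_e$ may be any $P$-vertex on that sub-path, and a priori many critical edges, with distinct endpoints $v_{e}$ along the same sub-path, could all be charged to the same pair $(x,i)$ -- your claim that at most two edges of $\OPT$ are involved (those incident to a single distinguished $v$) is unjustified.

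The paper closes this gap with a much more local observation that makes the detour through $T$'s sub-path structure unnecessary: for fixed $(x,i)$, the level-$i$ cell $C_2$ whose top side contains $x$ is uniquely determined, hence so is the vertical side $F_e$ (the left side of $C_2$) crossed by any critical edge charging to $(x,i)$. Since every $e\in\CE\subseteq X$ crosses an \emph{interesting} side at level $i_e$, that side satisfies $|G_S|=1$, i.e.\ it carries a single grouped crossing of $S$; hence at most one critical edge charges to $(x,i)$ (the degenerate case of several $\OPT$-edges passing through that single grouped crossing is absorbed into the ``near''-crossing merging of Step~3 of the construction, so the charge is still paid only once). Replacing your injectivity argument with this single-crossing observation repairs the proof; the remainder of your write-up then matches the paper's.
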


\begin{proof}
For each $e\in \CE$, by \cref{lem:beta}, $P_{i_e}\leq \beta_{i_e,a}(x)$, where $x$ is defined w.r.t.\ $e$.
We \emph{charge} $P_{i_e}$ to $\beta_{i_e,a}(x)$.
We observe that each $\beta_{i,a}(x)$ is charged by at most one edge $e\in \CE$, because we are in the single crossing case.\footnote{Technically, there can be more than one edge $e\in \CE$ that crosses at that single crossing. But those crossings must be  ``near'' crossings, and are thus combined into a single crossing. So we only need to charge $\beta_{i,a}(x)$ once to pay the connection of moving that single  crossing to the closest portal.}

Consider any crossing $x$ between $S$ and a horizontal line $h$ of level $i$.
For each integer $i\leq \log L$, we analyze $\mathbb{E}_a[\beta_{i,a}(x)]$ as follows.
There are two cases.
When $\pro_S(x)> \frac{L}{2^{i}\cdot r}$, we have
\[
\mathbb{E}_a[\beta_{i,a}(x)]=8\cdot \left(\frac{L}{2^i r}\right)^2\frac{1}{\pro_S(x)}.
\]
When $\pro_S(x)\leq \frac{L}{2^i\cdot r}$, since $a$ is independent of $i$, the probability over $a$ that $[-a+x^{(1)}-\pro_S(x),-a+x^{(1)}]$ contains an integer multiple of $L/2^i$ is $\frac{\pro_S(x)}{L/2^i}$.
Thus
\begin{align*}
\mathbb{E}_a[\beta_{i,a}(x)] &= \frac{\pro_S(x)}{L/2^i}
 \cdot \frac{L}{2^i\cdot g(1)} + \left(1-\frac{\pro_S(x)}{L/2^i}
 \right)\pro_S(x)\\
 & <
\left(1+\frac{1}{g(1)}\right)\cdot \pro_S(x)\\
&\leq
2\cdot \pro_S(x).
%& =\left(1+\frac{4}{r^2}\right)\cdot \alpha_i(x).
\end{align*}
%So in both cases, $\mathbb{E}_a[\beta_{i,a}(x)]\leq 8\cdot \alpha_{i}(x)$.

Using a similar argument from \cite{KNW21}, we have
\begin{align*}
\mathbb{E}_{a,b}\left[\beta_{i,a}(x)\right]&\leq \sum_{i=0}^{\log L} \mathbb{P}[h \text{ has level }i]\cdot \mathbb{E}_a[\beta_{i,a}(x)]\\
& =O\left(\sum_{i=0}^{\theta} \frac{2^i}{L} \cdot 2\cdot \pro_S(x) + \sum_{i=\theta+1}^{\log L} \frac{2^i}{L}\cdot 8\cdot \left(\frac{L}{2^i r}\right)^2\frac{1}{\pro_S(x)}\right)\\
& =O(1/r),
\end{align*}
where $\theta:=\log \frac{1}{r\cdot \pro_S(x)}$, and the last equality is by the convergence of sums of geometric progressions.
Since the total number of crossings between $S$ and all horizontal lines is $O(\opt)$, the expected total charge is $O(1/r)\cdot \opt$. The claim follows.
\end{proof}

\subsubsection{Bounding the cost for non-critical edges}

We consider the non-critical edges $e$ in decreasing order of $i_e$.
Let $e$ be any non-critical edge.
See \cref{fig:non-critical}.
We replace $\pi_1$ (resp.\ $\pi_2$) by a straight segment connecting the endpoints of $\pi_1$ (resp.\ $\pi_2$).
The solution is shortened by $\Delta_1+\Delta_2$.
We \emph{charge} $P_{i_e}$ to $\Delta_1+\Delta_2$.
By the definition of non-critical edges, we have \[P_{i_e}\leq (1/r)\cdot(\Delta_1+\Delta_2).\]
Since the charges are disjoint, the overall charge is at most $\cost(T)\leq 2\cdot \opt$.
Thus we obtain the following bound (\cref{fact:NCE}).

\begin{fact}
\label{fact:NCE}
\[\sum_{e \in \NCE} P_{i_e}\leq  O(1/r)\cdot \opt.\]
\end{fact}

\subsubsection{Reducing to Centered Crossings}
\label{sec:centered-crossing}
In this section, we prove \cref{lem:tilde-X}.

Consider an edge $e\in \OPT$.
Fix $a\in \{1,\dots, L\}$ which decides the level of vertical lines.
Let $h$ denote a vertical line of the smallest level that crosses $e$.
Let $i$ denote the level of $h$.
We assume that $\pro_S(x)>\frac{L}{2^i\cdot r}$, i.e., $x$ is not a ``near'' crossing along $h$.
This is without loss of generality, because ``near'' crossings have already been analyzed in \cref{sec:multiple-crossings}.

For any $b\in \{1,\dots,L\}$ which decides the level of horizontal lines, we define $w_b(x)$, which is an upper bound of the cost for moving $x$ in the construction w.r.t.\ the shift $b$:
\[w_b(x):=
\begin{cases}
    \PD_i, & \text{if $x$ is moved to a grid point (case 1);} \\
    \dist(x,y), & \text{if $x$ is moved to a point $y\in T$ (case 2)}.
\end{cases}
\]
We assume that, in case 2, $\dist(x,y)\leq \PD_i$, since otherwise we may move $x$ to the closest portal and switch $x$ to case 1 without increasing $w_b(x)$.

\begin{lemma}
\label{lem:centered}
Fix $a\in \{1,\dots, L\}$.
For any $b\in \{1,\dots, L\}$, let $i_b(x)\in \{0,1\}$ be an indicator such that $i_b(x):=1$ if $x$ is a crossing between $e$ and an interesting side, and $i_b(x):=0$ otherwise.
Let $c_b(x)\in \{0,1\}$ be an indicator such that $c_b(x):=1$ if $x$ is centered on $F_e$, and $c_b(x):=0$ otherwise.
We have \[\displaystyle\E_b[i_b(x)\cdot \PD_i]\leq 50\cdot \mathbb{E}_b[c_b(x)\cdot w_b(x)].\]
\end{lemma}

\begin{proof}
First, we show that, for any shift  $b_0$ such that $x$ is centered w.r.t.\ $b_0$, we have $\E_b[i_b(x)]\cdot \PD_i\leq w_{b_0}(x)$.
If $w_{b_0}(x)=\PD_i$, then the claim follows trivially. Next, consider the case when $w_{b_0}(x)=\dist(x,y)$ for some point $y\in T$.
Consider any shift $b$ such that $x$ is an interesting crossing with $b$.
Then $F(b)$ is not $T$-crossing.
Thus $x$ and $y$ must be separated by  a horizontal line of level $i$.
The probability over $b$ that $x$ and $y$ are separated by a horizontal line of level $i$ is $\dist(x,y)/(L/2^i)$.
Thus we have
\[\mathbb{E}_b[i_b(x)]\cdot\PD_i\leq \frac{\dist(x,y)}{L/2^i}\cdot \frac{L}{2^i\cdot g(1)}=\frac{w_{b_0}(x)}{g(1)}<w_{b_0}(x).\]

Since $\mathbb{P}_b[c_b(x)=1]=1/50$, we conclude that
\[\displaystyle\E_b[i_b(x)]\cdot \PD_i\leq \E_{b\mid c_b(x)=1}[w_{b}(x)]=50\cdot \mathbb{E}_b[c_b(x)\cdot w_b(x)].\]
\end{proof}

By the assumption in \cref{lem:tilde-X},
$\mathbb{E}_{a,b}\big[\sum_{e\in \tilde{X}} P_{i_e}\big]=O(1/r)\cdot\opt$.
The overall cost of moving a point $x$ to a point $y\in T$ is at most $O(1/r)\cdot\opt$ (\cref{lem:single-crossing-and-T-crossing}).
Thus \[\mathbb{E}_{a,b}\left[\sum_x c_b(x)\cdot w_b(x)\right]=O(1/r)\cdot \opt.\]
So by \cref{lem:centered},
\[\displaystyle\E_{a,b}\left[\sum_x i_b(x)\cdot \PD_i\right]\leq 50\cdot \mathbb{E}_{a,b}\left[\sum_x c_b(x)\cdot w_b(x)\right]=(1/r)\cdot\opt.\]
Since every edge $e\in X_{\rm bad}\subseteq X$ has a crossing $x$ with an interesting side of level $i_e$, the claim follows.

\subsection{Proof of \cref{lem:interesting}}
\label{sec:proof-interesting}
Consider any edge $e\in \OPT$.
Let $\mathcal{F}_e$ denote the set of all interesting sides $F$ of all cells $C$ such that $F$ has an \emph{active} crossing with $e$.
Let $W_e$ denote the total connection cost over all sides $F\in \mathcal{F}_e$ in Step~4 of the construction.
It suffices to show that $\sum_{e\in \OPT} W_e=O(1/r)\cdot \opt$.

Observe that every $F\in \mathcal{F}_e$ is of level at least $i_e$, by the definition of $i_e$.
For each level $i\geq i_e$, $\mathcal{F}_e$ contains at most two sides of level $i$.
This is because, any cell $C$ whose boundary has an active crossing with $e$ is such that $C$ contains exactly one of $u$ and $v$.
At each level $i\geq i_e$, there is at most one cell that contains exactly one of $u$ and $v$.
Thus $\mathcal{F}_e$ contains at most two sides of level $i$.

For simplicity, we assume that $e\in \OPT$ crosses a vertical side of level $i_e$, where $F_e$ denote this vertical side.
The analysis for $e\in \OPT$ that crosses a horizontal side of level $i_e$ is analogous.

We decompose $W_e$ into the sum of the following three parts:
\begin{itemize}
    \item $W_e^{(0)}$: the connection cost at $F_e$ if $F_e\in \mathcal{F}_e$; and 0 otherwise.
    \item $W_e^{(1)}$: the total connection cost at all sides $F\in \mathcal{F}_e$ of level at least $\log \frac{L}{r\cdot \cost(e)}$.
    \item $W_e^{(2)}$: the total connection cost at all sides $F\in \mathcal{F}_e$ of level less than $\log \frac{L}{r\cdot \cost(e)}$ s.t.\  $F\neq F_e$.
\end{itemize}

First, observe that $W_e^{(0)}\leq P_{i_e}$ if $F_e$ is interesting; and $W_e^{(0)}=0$ otherwise.

Next, we analyze $W_e^{(1)}$. Since for each level $i$, $\mathcal{F}_e$ contains at most 2 sides of level $i$, we have
\[W_e^{(1)}\leq 2\cdot \sum_{i \geq \log \frac{L}{r \cdot \cost(e)}}\frac{L}{2^i\cdot g(1)}<\frac{4r \cdot \cost(e)}{g(1)} \leq \frac{16 \cdot \cost(e)}{r},\]
where the last inequality follows from $g(1)\geq r^2/4$.

Finally, we analyze $W_e^{(2)}$.
Consider a side $F\in \mathcal{F}_e$ such that the connection cost at $F$ contributes to $W_e^{(2)}$.
We assume that $F$ is not a segment on $F_e$ (otherwise, the crossing between $F$ and $e$ would be the same point as the crossing between $F_e$ and $e$).
Since $F$ has a level of less than $\log \frac{L}{r\cdot \cost(e)}$, we have $\len(F)>r\cdot \cost(e)>2\cdot\cost(e)$, assuming $r>2$.
Using all of the facts that $F_e$ intersects $e$, $F$ intersects $e$, $\len(F)\geq 2\cdot\cost(e)$, $F$ is not a segment on $F_e$, both $F$ and $F_e$ are sides in the dissection, we conclude that $F$ must be a horizontal side whose right endpoint is on $F_e$, see \cref{fig:interesting}.
Let $A$ denote the right endpoint of $F$.
Let $B$ denote the intersection point between $e$ and $F$.
The distance between $A$ and $B$ is at most $\cost(e)$.
Since $A\in Z(F,1)$, the distance between $B$ and the closest point in $Z(F,1)$ is at most $\cost(e)$.
For each level $i<\log\frac{L}{r\cdot\cost(e)}$, the probability that $e$ cuts a horizontal side of level $i$ is at most $\frac{\cost(e)}{L/2^i}$.
Thus \[W_e^{(2)}\leq 2\cdot \cost(e)\cdot \sum_{i<\log\frac{L}{r\cdot\cost(e)}} \frac{\cost(e)}{L/2^i}<\frac{4\cdot \cost(e)}{r}.\]
\begin{figure}[H]
    \begin{center}
        \includegraphics[width=0.4\textwidth]{interesting.pdf}
    \end{center}
    \caption{\label{fig:interesting}The edge $e$ is the segment in green. $F_e$ is the vertical segment in bold. $F$ is the horizontal segment in bold. $A$ is the intersection point between $F_e$ and $F$.
    $B$ is the intersection point between $e$ and $F$. Then $\dist(A,B)\leq \cost(e)$.}
\end{figure}

Summing over all edges $e\in \OPT$, the overall connection cost in the claim is at most
\[\sum_{e\in \OPT} W_e^{(0)}+W_e^{(1)}+W_e^{(2)}
\leq \sum_{e\in X} P_{i_e}+ \sum_{e\in \OPT} (20/r)\cdot\cost(e),\]
where $X\subseteq \OPT$ denotes the set of edges $e\in \OPT$ such that $e$ crosses an interesting side at level $i_e$.
By \cref{lem:q(e)}, \[\mathbb{E}_{a,b}\left[\sum_{e\in X}  P_{i_e}\right]=O(1/r)\cdot \opt.\]
The claim follows.

\subsection{Proof of the Structure Theorem (\cref{thm:structure})}
The Structure Theorem follows directly from \cref{lem:feasible,lem:cost-analysis}.

\section{Proof of the Structure Theorem in $d$ Dimensions}

\label{sec:d-dimensions}

In this section, we prove the Structure Theorem in $d$ dimensions, for any constant $d\geq 3$.
To simplify the presentation, first, we focus on the case when $d=3$. The generalization to $d\geq 4$ is described  at the end of the section.

Most of the analysis in \cref{sec:structure} extends directly to $d$ dimensions, except for \cref{lem:Fe'}, which is used for bounding the cost for critical edges.
The key in this section is \cref{lem:3d-2crossings}, which is a generalization of \cref{lem:Fe'} to 3 dimensions.

%For each crossing point $x$ between $S$ and $h$, for each integer $i\in \mathbb{N}$, and for each integer $a\in\{1,\dots,L\}$, we define $\beta_{i,a}(x)$ as follows:\\[2mm]
%If $\pro_S(x)> \frac{L}{2^i r}$, then
%\[
%\beta_{i,a}(x):=8\cdot \left(\frac{L}{2^i r}\right)^2\frac{1}{\pro_S(x)}.
%\]
%\noindent
%If $\pro_S(x)\leq\frac{L}{2^i r}$,  then
%\[
%    \beta_{i,a}(x):=\pro_S(x).
%\]
%The above definition is sufficient, because in $d$ dimensions for $d\geq 3$, if two points on a $(D-1)$-dimensional hyperplane are cut by a level $i$ hyperplane and belong to adjacent level $i$ cells, then at least one of the two points is such that $\pro$ is at least the side length of the cell.

Let $e$ be a critical edge.
Let $i_e$ denote the smallest level of a plane that intersects $e$.
Let $\ell$ denote such a plane.
Let $C$ denote the unique cell that has a mid-square that belongs to $\ell$ and intersects $e$.
Let $F_e$ denote that mid-square of $C$.
To simplify the presentation, we assume that $C$ is the unit cube, i.e., $C=[0,1]^3$.
Let $z$ denote the crossing between $e$ and $F_e$.
See \cref{fig:mid}.
Let $C_1$ and $C_2$ denote the two half-cells of $C$ separated by $F_e$.
We define two curves $\pi_1$ and $\pi_2$ as in \cref{sec:structure}.
As in \cref{sec:structure}, we say that a point $x\in F_e$ is \emph{centered} if $x$ is within distance $1/100$ to the center point of $F_e$. We assume without loss of generality that $z$ is centered.

\begin{figure}[h]
\centering
\includegraphics[width=.22\textwidth]{mid-hyperplane.pdf}
\caption{The cell $C$ is represented by the cube. $F_e$ is the highlighted square, which is a mid-square of $C$. The edge $e=(u,v)$ is in green. $\pi_1$ and $\pi_2$ are the red dashed curves. $z$ is the crossing point between $e$ and $F_e$.}
\label{fig:mid}
\end{figure}

The main idea in this section is the following lemma (\cref{lem:3d-2crossings}).

\begin{lemma}
\label{lem:3d-2crossings}
    Consider three rectangles that are perpendicular to $F_e$: the first one through the horizontal midline of $F_e$ (\cref{fig:3d-a}), the second one through the vertical midline of $F_e$ (\cref{fig:3d-b}), and the third one through a diagonal of $F_e$ (\cref{fig:3d-c}).
    Then at least one of the three rectangles crosses both $\pi_1$ and $\pi_2$.
\end{lemma}

\begin{figure}[h]
\centering
\begin{subfigure}{.25\textwidth}
\centering
\includegraphics[width=.8\textwidth]{projection2.pdf}
\caption{}
\label{fig:3d-a}
\end{subfigure}
\begin{subfigure}{.25\textwidth}
\centering
\includegraphics[width=.8\textwidth]{projection1.pdf}
\caption{}
\label{fig:3d-b}
\end{subfigure}
\begin{subfigure}{.25\textwidth}
\centering
\includegraphics[width=.8\textwidth]{projection3.pdf}
\caption{}
\label{fig:3d-c}
\end{subfigure}
\caption{\label{fig:projections} Three rectangles in \cref{lem:3d-2crossings}.}
\end{figure}

The proof of the Structure Theorem for $d=3$ follows from \cref{lem:3d-2crossings} and a straightforward adaptation from \cref{sec:structure}.

To prove \cref{lem:3d-2crossings}, it suffices to prove the following lemma  (\cref{lem:2d-2crossings}).

\begin{lemma}
\label{lem:2d-2crossings}
Consider three segments in $F_e$: the horizontal midline segment (\cref{fig:horizontal_0}), the vertical midline segment (\cref{fig:vertical_0}), and a diagonal line segment (\cref{fig:diagonal_0}).
For each $i\in\{1,2\}$, let $\pi_i'$  denote the projection of $\pi_i$ onto $F_e$.
Then at least one of the three segments crosses both $\pi_1'$ and $\pi_2'$.
\end{lemma}

\begin{figure}[h]
\centering
\begin{subfigure}{.25\textwidth}
\centering
\includegraphics[width=.8\textwidth]{diag-horizontal-empty.pdf}
\caption{}
\label{fig:horizontal_0}
\end{subfigure}
\begin{subfigure}{.25\textwidth}
\centering
\includegraphics[width=.8\textwidth]{diag-vertical-empty.pdf}
\caption{}
\label{fig:vertical_0}
\end{subfigure}
\begin{subfigure}{.25\textwidth}
\centering
\includegraphics[width=.8\textwidth]{diag-empty.pdf}
\caption{}
\label{fig:diagonal_0}
\end{subfigure}
\caption{Three segments in $F_e$ in the claim of \cref{lem:2d-2crossings}. $F_e$ is the highlighted square.}
\end{figure}

In the rest of the section, we prove \cref{lem:2d-2crossings}.

\begin{fact}
\label{fact:pi3d}
Let $\pi\in \{\pi_1, \pi_2\}$.
Let $p$ and $q$ denote the projections of the two endpoints of $\pi$ onto $F_e$.
Let $\tilde p$ denote the point that is symmetric to $p$ with respect to the center $O$ of $F$.
See \cref{fig:pi3d}.
Then both $p$ and $q$ are on the boundary of $F_e$, and in addition, $\dist(q,\tilde p)\leq 0.1$.
\end{fact}

\begin{figure}[h]
\centering
\includegraphics[width=.25\textwidth]{center0-b.pdf}
\caption{$F_e$ is the highlighted square. The point $O$ is the center of the $F_e$. The red curve is the projection $\pi'$ of $\pi$ onto $F_e$.}
\label{fig:pi3d}
\end{figure}

\begin{proof}[Proof Sketch.]
Let $z$ be the crossing between $e$ and $F_e$.
Since $z$ is centered, $\dist(z,O)\leq 1/100$.
Let $u'$ and $v'$ be the projections of $u$ and $v$ onto $F_e$.
Since $e=(u,v)$ is a badly cut edge, $\dist(u,v)<1/r$, thus $\dist(z,v')<\dist(u',v')\leq \dist(u,v)<1/r$.
So $\dist(O,v')\leq \dist(O,z)+\dist(z,v')<(1/100) + (1/r)$.
Let $\pi'$ denote the projection of $\pi$ onto $F_e$.
Let $\Delta$ denote the length of $\pi'$ minus $\dist(p,q)$.
Since $e$ is a critical edge, $\Delta=O(1/r)$.
Combining the above facts with an elementary computation concludes the proof.
\end{proof}

\begin{figure}[h]
\centering
\begin{subfigure}{.25\textwidth}
\centering
\includegraphics[width=.8\textwidth]{diag-horizontal.pdf}
\caption{}
\label{fig:horizontal}
\end{subfigure}
\begin{subfigure}{.25\textwidth}
\centering
\includegraphics[width=.8\textwidth]{diag-vertical.pdf}
\caption{}
\label{fig:vertical}
\end{subfigure}
\begin{subfigure}{.25\textwidth}
\centering
\includegraphics[width=.8\textwidth]{diag.pdf}
\caption{}
\label{fig:diagonal}
\end{subfigure}
\caption{Illustration for the proof of \cref{lem:2d-2crossings}. $F_e$ is the highlighted square.}
\end{figure}

\begin{proof}[Proof of \cref{lem:2d-2crossings}]
If both $\pi_1'$ and $\pi_2'$ cross the horizontal midline, then we are done.
So without loss of generality, we assume that $\pi_1'$ does not cross the horizontal midline. See \cref{fig:horizontal}.
Then by \cref{fact:pi3d}, $\pi_1'$ must connect the left boundary of $F_e$ with the right boundary of $F_e$.
Thus $\pi_1'$ crosses the vertical midline.
If $\pi_2'$ also crosses the vertical midline, then we are done.
It remains to consider the case when $\pi_2'$ does not cross the vertical midline.
By \cref{fact:pi3d}, $\pi_2'$ must connect the top boundary of $F_e$ with the bottom boundary of $F_e$.
See \cref{fig:vertical}.
Thus the diagonal of $F_e$ crosses  both $\pi_1'$ and $\pi_2'$.
See \cref{fig:diagonal}.
The claim follows.
\end{proof}

Finally, consider the case when $d\geq 4$. We observe that, for each $i\in \{1,2\}$, there exists an integer $k_i\in [1,d]$ such that at least one endpoint of $\pi_i$ has the ${(k_i)}^{\rm th}$ coordinate in $\{0,1\}$.
We consider the two dimensional plane $\ell$ spanning the ${(k_1)}^{\rm th}$ dimension and the ${(k_2)}^{\rm th}$ dimension.
We define $F_e$ to be the square that is the projection of the cell $C$ onto $\ell$.
The rest of the analysis is a direct adaptation from the case when $d=3$.

\section{Computing Crossings}
\label{sec:portals}
In this section, we show the following lemma.

\begin{lemma}
    Computing the crossings between $T$ and all facets $F$ of all cells $C$ (Lines 3--5 of \cref{alg:main}) can be done in $O(\sqrt{d}\cdot n/\eps)$ time.
\end{lemma}

We assume that $\cost(T)\leq 2\sqrt{d}\cdot n/\eps$.
This is without loss of generality according to the Rounding Assumption of Rao and Smith~\cite[Section 2]{RS98}.
As a consequence, the total number of crossings between $T$ and all horizontal lines (resp.\ all vertical lines) is $O(\sqrt{d}\cdot n/\eps)$.

Our goal is for every crossing $x$ to know which quadtree cell $C$ has a facet $F$ containing $x$ and for every cell $C$ to know which crossings are on the facets of $C$.
According to Rao and Smith~\cite{RS98}, this ``knowledge'' can be computed in $O(1)$ time per crossing if we have the atomic floor operations.\footnote{See the last paragraph in the proof of Lemma~12 in~\cite{RS98}.}
Since the total number of crossings between $T$ and all horizontal lines (resp.\ all vertical lines) is $O(\sqrt{d}\cdot n/\eps)$,  the overall computation time is $O(\sqrt{d}\cdot n/\eps)$.

\section{Dynamic Program}
\label{sec:DP}

In this section, we show the following lemma.

\begin{lemma}\label{lem:DP-running-time}
There is a dynamic program that computes a shortest $r$-basic solution in  $2^{O(1/\eps^{d-1})} n$ time.
\end{lemma}

Our dynamic program is mostly the same as in \cite{KNW21}.
The main difference is that, when combining solutions from children cells, we might include direct connections, i.e.,\ segments without visiting any point from $P$.

We describe how we include direct connections in the dynamic program of \cite{KNW21}.
Let $F$ be a facet of a cell $C$.
We guess the number $k$ of crossings between our solution and $F$.
For a given $k$, the set of portals is $\grid(F,g(k))$, plus, possibly another portal $x_F$ (recall that $x_F$ is pre-computed in Line~5 of \cref{alg:main}).
Note that crossings at corners of $C$ do not count towards the number $k$ of crossings.
We consider two types of connections between portals.
The first type of connections are paths visiting at least one point in $P$.
They are called \emph{multipath-type} connections.
We deal with those connections similarly as~\cite{KNW21} using the formulation as a multipath problem.
The second type of connections are direct segments without visiting any point in $P$.
They are called \emph{direct-type} connections.
For any non-leaf cell $C$, in order to compute a solution within $C$, we combine multipath-type connections, which are solutions from children cells, with  direct-type connections, which are newly created segments. See \cref{fig:stitching}.

\begin{figure}[tb]
\begin{center}
    \includegraphics[width=40mm]{direct.pdf}
\caption{\label{fig:stitching} A cell $C$ with four children $C_1, C_2, C_3 ,C_4$. The solid curves are multipath-type connections. They are solutions obtained from children cells. The dashed segments are direct-type connections. They are newly created, i.e., without recursive construction within the children cells of $C$.}
\end{center}
\end{figure}

The rest of the description of the dynamic program and the running time analysis are straightforward adaptations from \cite{KNW21}.

\section{Conclusion}
In this work, we have given a randomized $(1+\eps)$-approximation algorithm for Euclidean TSP in the $d$-dimensional Euclidean space that runs in  $2^{O(1/\eps^{d-1})}n$ time in the real-RAM model with atomic floor or mod operations.
Our approach naturally generalizes to other geometric optimization problems.
For example, combining our approach with the approximation scheme for Steiner Tree from \cite{KNW21} leads to an approximation scheme for Steiner Tree in $2^{O(1/\eps^{d-1})} n$ time.

We leave as an open problem to design a \emph{deterministic} approximation scheme for the Euclidean TSP in $2^{O(1/\eps^{d-1})} n$ time.

\subsection*{Acknowledgment}
We thank Sándor Kisfaludi-Bak for sharing the full version of \cite{KNW21} with us and for answering multiple questions regarding technical details in \cite{KNW21}.

\bibliographystyle{plain}
\bibliography{refs}

\begin{thebibliography}{10}

\bibitem{Aro98}
Sanjeev Arora.
\newblock Polynomial time approximation schemes for {Euclidean} traveling
  salesman and other geometric problems.
\newblock {\em J. {ACM}}, 45(5):753--782, 1998.

\bibitem{arora2003approximation}
Sanjeev Arora.
\newblock Approximation schemes for {NP}-hard geometric optimization problems:
  A survey.
\newblock {\em Mathematical Programming}, 97(1):43--69, 2003.

\bibitem{BG13}
Yair Bartal and Lee{-}Ad Gottlieb.
\newblock A linear time approximation scheme for {Euclidean TSP}.
\newblock In {\em {FOCS}}, pages 698--706. {IEEE} Computer Society, 2013.

\bibitem{Cha08}
Timothy~M. Chan.
\newblock Well-separated pair decomposition in linear time?
\newblock {\em Inf. Process. Lett.}, 107(5):138--141, 2008.

\bibitem{cohen2020approximation}
Vincent Cohen-Addad.
\newblock Approximation schemes for capacitated clustering in doubling metrics.
\newblock In {\em Proceedings of the Fourteenth Annual ACM-SIAM Symposium on
  Discrete Algorithms}, pages 2241--2259. SIAM, 2020.

\bibitem{DKS97}
Gautam Das, Sanjiv Kapoor, and Michiel H.~M. Smid.
\newblock On the complexity of approximating {Euclidean} traveling salesman
  tours and minimum spanning trees.
\newblock {\em Algorithmica}, 19(4):447--460, 1997.

\bibitem{de2023eth}
Mark de~Berg, Hans~L. Bodlaender, S\'{a}ndor {Kisfaludi-Bak}, and Sudeshna
  Kolay.
\newblock An {ETH}-tight exact algorithm for {Euclidean TSP}.
\newblock {\em SIAM Journal on Computing}, 52(3):740--760, 2023.

\bibitem{har2011geometric}
Sariel Har-Peled.
\newblock {\em Geometric Approximation Algorithms}.
\newblock American Mathematical Society, 2011.

\bibitem{KNW21}
S{\'a}ndor {Kisfaludi-Bak}, Jesper Nederlof, and Karol Węgrzycki.
\newblock {A Gap-ETH-tight approximation scheme for Euclidean TSP}.
\newblock In {\em IEEE 62nd Annual Symposium on Foundations of Computer Science
  (FOCS)}, pages 351--362, 2021.
\newblock Full version at \url{https://arxiv.org/abs/2011.03778}.

\bibitem{klein2008linear}
Philip~N. Klein.
\newblock A linear-time approximation scheme for {TSP} in undirected planar
  graphs with edge-weights.
\newblock {\em SIAM Journal on Computing}, 37(6):1926--1952, 2008.

\bibitem{korte12}
Bernhard Korte and Jens Vygen.
\newblock {\em Combinatorial Optimization: Theory and Algorithms}.
\newblock Springer, 2012.

\bibitem{marx2007optimality}
D{\'a}niel Marx.
\newblock On the optimality of planar and geometric approximation schemes.
\newblock In {\em 48th Annual IEEE Symposium on Foundations of Computer Science
  (FOCS'07)}, pages 338--348. IEEE, 2007.

\bibitem{Mit99}
Joseph S.~B. Mitchell.
\newblock Guillotine subdivisions approximate polygonal subdivisions: {A}
  simple polynomial-time approximation scheme for geometric {TSP}, $k$-{MST},
  and related problems.
\newblock {\em {SIAM} J. Comput.}, 28(4):1298--1309, 1999.

\bibitem{narasimhan2007geometric}
Giri Narasimhan and Michiel Smid.
\newblock {\em Geometric Spanner Networks}.
\newblock Cambridge University Press, 2007.

\bibitem{RS98}
Satish Rao and Warren~D. Smith.
\newblock Approximating geometrical graphs via ``spanners'' and ``banyans''.
\newblock In {\em {STOC}}, pages 540--550. {ACM}, 1998.

\bibitem{Vaz2010}
Vijay~V. Vazirani.
\newblock {\em Approximation Algorithms}.
\newblock Springer, 2010.

\bibitem{williamson2011design}
David~P. Williamson and David~B. Shmoys.
\newblock {\em The Design of Approximation Algorithms}.
\newblock Cambridge University Press, 2011.

\end{thebibliography}
\appendix

\end{document}